\newtheorem*{rep@theorem}{\rep@title}
\newcommand{\newreptheorem}[2]{%
\newenvironment{rep#1}[1]{%
 \def\rep@title{#2 \ref{##1}}%
 \begin{rep@theorem}}%
 {\end{rep@theorem}}}
\newtheorem{lemma}{Lemma}
\newtheorem{theorem}{Theorem}
\newtheorem{corollary}{Corollary}
\newtheorem{remark}{Remark}
\newtheorem*{remark*}{Remark}
\newtheoremstyle{example}{}{}{}{}{\bfseries}{\smallskip}{\newline}{}
\theoremstyle{example}
\newtheorem{example}{Example}
\newcommand{\RM}[0]{\mathrm{RM}}
\newcommand{\T}[0]{\mathrm{T}}
\begin{document}

\title{Entanglement-assisted Quantum Reed-Muller Tensor Product Codes}

\author{Priya~J.~Nadkarni}
\affiliation{Department of Electronic Systems Engineering, Indian Institute of Science, Bengaluru, India, 560012}
\orcid{0000-0002-1351-2959}
\email{priya@alum.iisc.ac.in}
\author{Praveen~Jayakumar}
\email{praveen.jayakumar@mail.utoronto.ca}
\orcid{0000-0002-1523-8260}
\affiliation{Department of Electronic Systems Engineering, Indian Institute of Science, Bengaluru, India, 560012}
\affiliation{Department of Chemistry, University of Toronto, Toronto, Canada, M5S 3H6}
\author{Arpit Behera}
\orcid{0009-0002-1496-4639}
\email{arpitbehera@alum.iisc.ac.in}
\affiliation{Department of Electronic Systems Engineering, Indian Institute of Science, Bengaluru, India, 560012}
\affiliation{Department of Complex Systems, Weizmann Institute of Science, Rehovot, Israel, 7610001}
\author{Shayan Srinivasa Garani}
\email{shayangs@iisc.ac.in}
\orcid{0000-0002-2459-1445}
\affiliation{Department of Electronic Systems Engineering, Indian Institute of Science, Bengaluru, India, 560012}
\thanks{\\{A part of} this work has previously been presented as part of an invited talk at IEEE Information Theory and Applications Workshop 2023.}
\maketitle
\begin{abstract}
We present the construction of {standard} entanglement-assisted (EA) qubit Reed-Muller (RM) codes {and their tensor product variants from classical RM codes}. We show that the EA RM codes obtained using the CSS construction have zero coding rate and negative catalytic rate. We further show that EA codes constructed from these same classical RM codes using the tensor product code (TPC) construction have positive coding rate and provide a subclass of EA RM TPCs that have positive catalytic rate, thus establishing the coding analog of superadditivity for this family of codes, useful towards quantum communications. {We also generalize this analysis to obtain conditions for EA TPCs from classical codes to have positive catalytic rate when their corresponding EA CSS codes have zero rate.}
\end{abstract}
\section{Introduction}

Quantum error correction is essential for building fault-tolerant quantum computing systems and robust quantum communication systems. Shor first constructed a single qubit error correcting quantum code \cite{Shor_QECC}, after which Gottesman proposed the stabilizer framework \cite{Gottesman_thesis} to construct quantum codes. The stabilizer framework based on Weyl-Heisenberg operators from the Pauli group can be considered analogous to the classical additive coding framework. The nature of the stabilizer framework necessitates the stabilizers to commute with each other, enforcing the analogous classical additive code to satisfy a dual-containing constraint. 

Calderbank, Shor, and Steane (CSS) further proposed a method to construct quantum codes, also called CSS codes \cite{CSS_CS}\cite{CSS_Steane}, from two classical codes that satisfy a dual-containing constraint. Since the CSS code properties depend on the corresponding classical codes which are well-studied, the analysis of CSS codes is simple. Brun \textit{et al.} further introduced the construction of quantum codes, also called entanglement-assisted (EA) codes \cite{Brun06}, from classical codes that do not satisfy the dual-containing constraint by introducing the concept of utilizing pre-shared entangled states between the transmitter and receiver. The receiver end qubits of the entangled state are assumed to be noise-free. The EA code construction relies on constructing an abelian group from a set of non-commuting operators. This class of codes can provide better error correction ability than the un-assisted case, useful for EA communications. {EA CSS codes are constructed from two classical codes that do not satisfy the dual-containing criteria \cite{Hsieh07} \cite{Guenda17}.}

Among the various classical codes studied over the years, Reed-Muller (RM) codes have been used in satellite and deep-space communications, {and polar codes, a generalization of RM codes, are being used in the control channel} of the 5G standard \cite{RM_Codes_5G}. Their algebraic properties make them not only locally testable but also locally decodable and list decodable \cite{RM_Codes} \cite{RM_ListDecoding}. {RM codes have soft-decision decoders that utilize the soft information to perform better.
\cite{RM_Code_Soft_Decision}
} {Classical and quantum RM codes are known to achieve the capacity of classical and quantum erasure channels \cite{RM_Codes_CapacityAchieveErasure} \cite{QRM_Codes_CapacityAchieve_Erasure}, respectively}. The construction of binary quantum RM codes from dual-containing classical RM codes was first proposed by Steane \cite{Steane_RM}. Sarvepalli and Klappenecker \cite{QRM_Sarvepalli} generalized the quantum RM code construction to the non-binary case. 

In this paper, we study the construction of EA RM codes from classical RM codes that do not satisfy the dual-containing criteria. We obtain an analytical expression for the number of pre-shared entangled qubits required to construct the code. We further show that the EA RM codes constructed from a single classical RM code have zero coding rate and negative catalytic rate; hence, they might not be useful in practice. 

On the other hand, {classical tensor product codes (TPCs) constructed from two classical codes whose parity check matrix is the tensor product of the parity check matrices of these classical component codes are known to have higher coding rate than their component codes \cite{Wolf06}. } {Using the tensor product construction on the classical codes followed by the CSS/EA CSS construction on the TPC, we obtain the quantum tensor product code (QTPC)/ EA TPC \cite{CodingAnalogSuperadditivity}\cite{QuantumTPC_Fan}\cite{QuantumTPC_Fan2}.  The quantum TPC constructions have} been shown to have the ability to construct positive rate quantum TPCs from classical codes which produce zero rate CSS codes. This property is known as the coding analog of superadditivity\cite{CodingAnalogSuperadditivity}. {In this paper, we show that when the number of pre-shared entangled qubits for each of the EA CSS component codes is lesser than the number of independent parity checks of the classical component codes, coding analog of superadditivity can be achieved. We also show that when $R_1+R_2 > R_1R_2 + 0.5$, the EA TPC obtained has a positive catalytic rate, where $R_1$ and $R_2$ are the coding rates of the classical component codes.}

Using the {EA} TPC construction, we construct the entanglement-assisted RM TPCs from the classical RM codes. We show that all these EA RM {TPCs} have positive coding rate and some of them have positive catalytic rate as well. We further show that EA RM TPCs constructed using both classical components codes to be $\mathrm{RM}(r,m)$ have positive catalytic rates if $m - 2r \leq l(r)$ and have negative catalytic rates if $m - 2r > l(r)$, where $l(r)$ is the maximum value of $i \in \mathbb{Z}^+$ satisfying $\displaystyle\sum_{u=0}^{r}\binom{2r+i}{u} > \frac{2^{2r+i}}{2 + \sqrt{2}}$. {We view $l(r)$ as a natural boundary on $(m-2r)$ that separates EA RM TPCs with positive catalytic rate from those with negative catalytic rate. In Section \ref{sec:tpcexample}, we rigorously derive the expression for $l(r)$.} {We also derive a condition for EA polar TPCs that have positive catalytic rate.} These constructed EA RM TPCs can be used with an optimal quantum interleaver to enhance its burst error correction ability \cite{QTPC_Interleaver}{, where burst errors correspond to errors on contiguous qubits transmitted across a quantum communication channel or stored in a quantum memory}.

The paper is organized as follows: In Section \ref{sec:prelim}, we briefly discuss preliminary concepts such as RM codes, EA CSS codes, classical TPCs, EA TPCs, various interpretations of quantum coding rate for EA codes, and coding analog of superadditivity. {We also derive a condition for an EA TPC to have a positive catalytic rate based on the coding rates of its classical component codes.} In Section \ref{sec:EARM}, we provide the code construction of EA RM codes, provide the explicit form of the number of pre-shared entangled qubits required to construct the code, and prove it has zero coding rate and negative catalytic rate. In Section \ref{sec:EATPC_RM}, we use the EA TPC construction with classical RM codes and show that EA {RM TPCs} with positive coding rate and positive catalytic rate can be constructed. We further provide a subclass of EA RM TPCs whose both coding rate and catalytic rate are positive. We also discuss the relations between EA RM TPCs and larger length EA RM codes. {We end this Section by deriving a condition for EA polar codes to have positive catalytic rates.} Finally, we conclude our paper in Section~\ref{sec:Conclusion}.\vspace{-0.1cm}

\section{Preliminaries} \label{sec:prelim}

\subsection{Reed-Muller codes} \label{subsec:rmcodes}
The classical binary Reed-Muller (RM) code $\RM(r, m)$, is a linear code of length $2^m$, dimensions $\sum_{i=0}^r{m \choose i}$, and distance $2^{m-r}$, i.e., it is a $[2^m, \sum_{i=0}^{r}{m \choose i}, 2^{m-r}]$ linear code. The codewords of the code can be obtained as evaluations of polynomials. We define the \textit{evaluation vector} of a multivariate polynomial $f: \mathbb{F}_2^m \rightarrow \mathbb{F}_2$ from the polynomial ring $W_m := \mathbb{F}_2[x_1, x_2, \cdots, x_m]$ as the $2^m$-length vector obtained by the concatenation of the bit values of {$f$} evaluated at all points $\bm{z} = (z_1, z_2, \cdots, z_m)$ in the domain $\mathbb{F}_2^m$.
If we denote the evaluation of a single point as
\begin{equation}
    \mathrm{Eval}_{\bm{z}}(f) := f(\bm{z}),
\end{equation}
then, 
\begin{equation}\label{def:eval}
    \mathrm{Eval}^{(m)}(f) := (\mathrm{Eval}_{\bm{z}}(f) | \forall \bm{z} \in \mathbb{F}_2^m).
\end{equation}
Generally, in equation \eqref{def:eval} we consider the points of evaluation $\bm z$ in an increasing order, where the ordering is in the usual sense of binary numbers. This choice becomes important to obtain the well-known Plotkin construction of the code \cite{Abbe}.

The RM code $\RM(r, m)$ is the set of evaluation vectors of all multivariate polynomials over $m$ variables upto degree $r$.
\begin{equation}\label{def:RMEval}
    \mathrm{RM}(r, m) := \{\mathrm{Eval}^{(m)}(f)|\forall f\in W_m, \mathrm{deg}(f)\leq r\}.
\end{equation}

\begin{example} \label{ex:RM_code_example}
 Let us consider $r=1$ and $m=4$. The RM code is a $[2^4, \underset{i=0}{\overset{1}{\sum}} {4 \choose i}, 2^{4-1}] = [16, 5, 8]$ code, {given by the set of evaluation vectors of all multivariate polynomials over $4$ variables upto degree $1$.} We note that the set of all multivariate monomials over $m$ variables upto degree $r$ forms the basis of the set of all multivariate polynomials over $m$ variables upto degree $r$. Thus, the set of all monomials in $W_4$ upto degree $1$ forms the basis of the set of all polynomials in $W_4$ upto degree $1$. For the multivariate polynomial {of degree at most $1$} defined over the variables $x_1$, $x_2$, $x_3$, and $x_4$, the basis of multivariate monomials is $\{1, x_1, x_2, x_3, x_4\}$. The evaluation vector $\mathrm{Eval}^{(4)}(x_1)$ of the monomial $x_1$ is $[0~0~0~0~0~0~0~0~1~1~1~1~1~1~1~1]$ as the value of $x_1$ is $0$ for the first $8$ vectors in $\mathbb{F}_2^{m}$ and is $1$ for the rest $8$ vectors. Similarly, we obtain the evaluations $\mathrm{Eval}^{(4)}(1) = [1~1~1~1~1~1~1~1~1~1~1~1~1~1~1~1]$, $\mathrm{Eval}^{(4)}(x_2) = [0~0~0~0~1~1~1~1~0~0~0~0~1~1~1~1]$, $\mathrm{Eval}^{(4)}(x_3) = [0~0~1~1~0~0~1~1~0~0~1~1~0~0~1~1]$, and 
 $\mathrm{Eval}^{(4)}(x_4) = [0~1~0~1~0~1~0~1~0~1~0~1~0~1~0~1]$. The generator matrix of the $\mathrm{RM}(1,4)$ code is
 \begin{align}
\!\!\!G\! \! &=\!\!\!\begin{bmatrix}\mathrm{Eval}^{(4)}(1)\\\mathrm{Eval}^{(4)}(x_1)\\\mathrm{Eval}^{(4)}(x_2)\\\mathrm{Eval}^{(4)}(x_3)\\\mathrm{Eval}^{(4)}(x_4)\end{bmatrix},\nonumber\\
&=\!\!\!\left[\!\begin{array}{p{0.05cm}p{0.05cm}p{0.05cm}p{0.05cm}p{0.05cm}p{0.05cm}p{0.05cm}p{0.05cm}p{0.05cm}p{0.05cm}p{0.05cm}p{0.05cm}p{0.05cm}p{0.05cm}p{0.05cm}p{0.05cm}}
1 & 1 & 1 & 1 & 1 & 1 & 1 & 1 & 1 & 1 & 1 & 1 & 1 & 1 & 1 & 1\\
0 & 0 & 0 & 0 & 0 & 0 & 0 & 0 & 1 & 1 & 1 & 1 & 1 & 1 & 1 & 1\\
0 & 0 & 0 & 0 & 1 & 1 & 1 & 1 & 0 & 0 & 0 & 0 & 1 & 1 & 1 & 1\\
0 & 0 & 1 & 1 & 0 & 0 & 1 & 1 & 0 & 0 & 1 & 1 & 0 & 0 & 1 & 1\\
0 & 1 & 0 & 1 & 0 & 1 & 0 & 1 & 0 & 1 & 0 & 1 & 0 & 1 & 0 & 1
\end{array}\right]\!\!.\label{eqn:RM_Example_G}
\end{align}
We note that the rows of the generator matrix $G$ of the $\mathrm{RM}(1,4)$ code is a subset of the rows of the matrix $\begin{bmatrix}1 & 1\\0 & 1\end{bmatrix}^{\otimes 4}$. 
\end{example}

We next provide a few properties of the evaluation function.
We note that $\mathrm{Eval}^{(m)}(f)$ is linear in the polynomial $f$.
\begin{lemma}[Linearity]\label{lemma:linearity} 
    For $f, g : \mathbb{F}_2^m \rightarrow \mathbb{F}_2; f, g \in W_m := \mathbb{F}_2[x_1, x_2, \cdots, x_m]$, the evaluation function has the following properties:
\begin{subequations}
    \begin{align}
    \!\!(i)& \mathrm{Eval}^{(m)}(f \!+\! g) \!=\! \mathrm{Eval}^{(m)}(f) \!\oplus\! \mathrm{Eval}^{(m)}(g)\!\label{prop:eval1}\\
    \!\!(ii)& \mathrm{Eval}^{(m)}(fg) \!=\! \mathrm{Eval}^{(m)}(f)\!\odot\!\mathrm{Eval}^{(m)}(g)\!\label{prop:eval2}
\end{align}
\end{subequations}
where {$\oplus$ denotes the bitwise addition of two vectors, $\odot$ denotes the bitwise product of the two vectors, and $\oplus, \odot : \mathbb{F}_2^{2^m} \times \mathbb{F}_2^{2^m} \rightarrow \mathbb{F}_2^{2^m}$.}
\end{lemma}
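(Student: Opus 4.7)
My plan is to reduce both statements to pointwise identities on $\mathbb{F}_2$ by invoking the definition of the evaluation vector in \eqref{def:eval}, and then to use the fact that evaluation at a fixed point is a ring homomorphism $W_m \to \mathbb{F}_2$. Concretely, since $\mathrm{Eval}^{(m)}(\cdot)$ is defined as the concatenation of single-point evaluations $\mathrm{Eval}_{\bm z}(\cdot) = (\cdot)(\bm z)$ over all $\bm z \in \mathbb{F}_2^m$ in a fixed order, two such vectors are equal iff they agree coordinate-wise. Likewise, the bitwise operations $\oplus$ and $\odot$ act coordinate-wise on $\mathbb{F}_2^{2^m}$. Therefore, to prove \eqref{prop:eval1} and \eqref{prop:eval2}, it is enough to check that for every $\bm z \in \mathbb{F}_2^m$,
\begin{align}
(f+g)(\bm z) &= f(\bm z) + g(\bm z), \\
(fg)(\bm z) &= f(\bm z)\, g(\bm z),
\end{align}
where addition and multiplication on the right are in $\mathbb{F}_2$.

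For part (i), I would write $f = \sum_{\bm\alpha} a_{\bm\alpha} x^{\bm\alpha}$ and $g = \sum_{\bm\alpha} b_{\bm\alpha} x^{\bm\alpha}$ as formal polynomials in $W_m$, so that $f+g = \sum_{\bm\alpha}(a_{\bm\alpha}+b_{\bm\alpha})x^{\bm\alpha}$; evaluating at $\bm z$ and using distributivity in $\mathbb{F}_2$ gives $(f+g)(\bm z)=f(\bm z)+g(\bm z)$, and since addition in $\mathbb{F}_2$ is precisely $\oplus$, concatenating over all $\bm z$ yields \eqref{prop:eval1}. For part (ii), the product $fg$ in $W_m$ is again a formal polynomial, and evaluating a product of polynomials at a point coincides with the product of the evaluated scalars by distributivity; concatenating over all $\bm z$ and noting that $\odot$ is defined exactly as coordinate-wise $\mathbb{F}_2$-multiplication gives \eqref{prop:eval2}.

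There is no real obstacle here: the lemma is an instance of the standard fact that polynomial evaluation is a ring homomorphism, lifted from a single point to the product of all points. The only point requiring a little care is to keep the ordering of the points $\bm z$ consistent on both sides (as fixed in the discussion after \eqref{def:eval}), so that the two concatenated vectors are being compared entrywise with the same indexing; once this is observed, the two identities are immediate.
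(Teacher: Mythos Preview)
Your proof is correct. For part (i), your argument is essentially identical to the paper's: both reduce to the pointwise identity $(f+g)(\bm z)=f(\bm z)+g(\bm z)$ and then concatenate over all $\bm z$. For part (ii), your route is slightly more direct than the paper's. You invoke the ring-homomorphism property of evaluation at a point to get $(fg)(\bm z)=f(\bm z)g(\bm z)$ immediately, whereas the paper first establishes \eqref{prop:eval2} for monomials $f=x_A$, $g=x_B$ by a small case analysis on $z_i=0$, and only then extends to arbitrary polynomials using part (i) together with distributivity of $\odot$ over $\oplus$. Your approach is cleaner and avoids the intermediate monomial step; the paper's version is more explicit about the combinatorics of the monomial basis, which ties in with how Lemma~\ref{lemma:tpRM} is later proved. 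Either way, the content is the same elementary fact.
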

The proof of Lemma \ref{lemma:linearity} is provided in Appendix~\ref{app:prop}.

Since any multivariate polynomial can be written as a linear sum of multivariate monomials, we note that the minimal generators of the codespace are given by the evaluation vectors of monomials in $W_m$. 
We define the set $[m] := \{1, 2, \cdots, m\}$ as the set of positive integers upto $m$. We introduce a shorthand notation for a monomial 
\begin{equation}
    x_A := \prod_{i\in A}x_i, ~~~ A {\subseteq} [m]
\end{equation}
as the product of the variables given by indices in set A. 
The generator $G(r, m)$ of $\RM(r, m)$ is given by
\begin{equation}
    \!\!\!G(r, m) := \{\mathrm{Eval}^{(m)}(x_A) | \forall A {\subseteq} [m], |A| \leq r\}.\!
\end{equation}
By construction, $\RM(0, m)\subset \RM(1, m)\subset \cdots \subset \RM(m, m)$ and $\RM(r, m)^{\perp} = \RM(m-r-1, m)$ {for $m\in \mathbb{Z}^+$.}

The RM code can also be constructed by the Kronecker product construction\cite{Abbe}, wherein the rows of the generator matrix of $\RM(r, m)$ are given by the rows of weight $\geq 2^{m-r}$ of the matrix $\begin{pmatrix} 1& 1 \\ 0 & 1\end{pmatrix}^{\otimes n}$. This suggests that RM codewords are tensor products of smaller RM codewords. We now formalize this notion using the evaluation vector function.

For the bit string in equation \eqref{def:eval}, due to our choice of ordering of $\bm z$, we observe the tensor product property of $\mathrm{Eval}^{(m)}(\cdot)$ 
\begin{lemma}[Tensor product]\label{lemma:tpRM}
    The tensor product of evaluation vectors of polynomials over the field $\mathbb{F}_2$ is an evaluation vector of a polynomial from a larger ring of the same field.
    In particular, for positive integers $m_1, m_2$ and $i \in  [m_1], j \in [m_2]$ where $[m] = \{1, 2, \cdots, m\}$ the evaluation vector has the property:
    \begin{align}
        (i) & \mathrm{Eval}^{(m_1)}(x_i) \otimes \mathrm{Eval}^{(m_2)}(x_j) \nonumber\\ &\hspace{2.4cm}= \mathrm{Eval}^{(m_1 + m_2)}(x_ix_{m_1 + j})\label{tp1}\\
        (ii) & \mathrm{Eval}^{(m_1)}(x_A)\! \otimes\! \mathrm{Eval}^{(m_2)}(x_B) \!\nonumber\\ &\hspace{2.3cm}=\! \mathrm{Eval}^{(m_1 + m_2)}(x_Ax_{m_1 + B})\label{tp2}
    \end{align}
    where $A \subseteq [m_1], B \subseteq [m_2]$ are index sets and $m_1 + B = \{m_1 + b | b \in B\}$ is a shift of indices in set $B$.
\end{lemma}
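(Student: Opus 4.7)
The plan is to prove both parts by a direct component-wise comparison, with the key step being to unwind how our chosen ordering of $\mathbb{F}_2^m$ interacts with the tensor product. First I would establish the factorization of the ordering: because points in $\mathbb{F}_2^m$ are enumerated in increasing binary order with $z_1$ as the most significant bit (confirmed by the example, where $\mathrm{Eval}^{(4)}(x_1)$ places its eight $1$'s in the second half), the $p$-th entry of $\mathrm{Eval}^{(m)}(f)$ is $f(\bm z_p)$, where $\bm z_p$ is the binary expansion of $p-1$ read MSB-first into $(z_1,\dots,z_m)$. Applied to $\mathbb{F}_2^{m_1+m_2}$, this means the $((a-1)2^{m_2}+b)$-th point splits cleanly as $(\bm z'_a,\bm z''_b)$, where $\bm z'_a\in\mathbb{F}_2^{m_1}$ and $\bm z''_b\in\mathbb{F}_2^{m_2}$ are the $a$-th and $b$-th points in their own orderings. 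This factorization is the one non-trivial piece of bookkeeping; once in hand, everything else reduces to a substitution.

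Given this, (i) becomes a check at a generic index $((a-1)2^{m_2}+b)$. By the definition of $\otimes$, the LHS entry is $(\mathrm{Eval}^{(m_1)}(x_i))_a \cdot (\mathrm{Eval}^{(m_2)}(x_j))_b = (z'_a)_i \cdot (z''_b)_j$. On the RHS, I would evaluate $x_i x_{m_1+j}$ at the joint point $(\bm z'_a,\bm z''_b)$: the index $i\in[m_1]$ selects $(z'_a)_i$ from the first block, and the index $m_1+j$ selects $(z''_b)_j$ from the second block, producing the same product. Equality in every coordinate gives (i).

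For (ii), I would rerun the same computation with monomials in place of single variables: the LHS entry is $x_A(\bm z'_a)\cdot x_B(\bm z''_b)=\prod_{i\in A}(z'_a)_i\prod_{j\in B}(z''_b)_j$, and the RHS entry evaluates $x_A x_{m_1+B}$ at $(\bm z'_a,\bm z''_b)$, yielding the same product. As an alternative route, one can deduce (ii) from (i) by combining Lemma \ref{lemma:linearity}(ii) with the mixed-product identity $(u\odot u')\otimes(v\odot v')=(u\otimes v)\odot(u'\otimes v')$, iterating once for each element of $A$ and $B$; but the direct substitution above is cleaner and avoids any induction.

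The main (and essentially the only) obstacle is formalizing the index-factorization step: one must be careful that the $(z_1,\dots,z_{m_1})$ coordinates genuinely behave as high-order bits and $(z_{m_1+1},\dots,z_{m_1+m_2})$ as low-order bits under the chosen ordering, since the claim would fail for the opposite convention. After that, both parts of the lemma collapse to a single-line pointwise identity.
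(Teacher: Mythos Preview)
Your proposal is correct and mirrors the paper's proof almost exactly: the paper also proves (i) by the index factorization $z = z^{(1)}2^{m_2} + z^{(2)}$ and a coordinate-wise comparison showing $z_i^{(1)}z_j^{(2)} = z_i z_{m_1+j}$. The only minor difference is in (ii): the paper takes what you call the ``alternative route'' (deriving it from (i) via Lemma~\ref{lemma:linearity}(ii) and the mixed-product identity $(a_1\otimes b_1)\odot(a_2\otimes b_2) = (a_1\odot a_2)\otimes(b_1\odot b_2)$), whereas you prefer the direct substitution---but since you explicitly mention both, the two write-ups are essentially interchangeable.
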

\hspace{-0.4cm}The proof of Lemma \ref{lemma:tpRM} is provided in Appendix~\ref{app:prop}

Due to the above property and linearity of $\mathrm{Eval}^{(m)}(\cdot)$, tensor product of any codewords of RM codes (not necessarily of the same code parameters $r, m$) are codewords of some larger RM code. Thus, the classical product code generated using two RM codes will also be an RM code. In particular, tensor product of codewords of $\RM(r_1, m_1)$ and $\RM(r_2, m_2)$ will be a codeword in $\RM(r_1 + r_2, m_1 + m_2)$. In Appendix \ref{app:product}, we show that the generator matrix of the product code can be written as the tensor product of the individual generator matrices of the component codes. Thus the product code\cite{MacWilliams} generated using two RM codes is a subset of some larger RM code.
\begin{equation}\label{eqn:tprm_in_rm}
    \RM(r_1, m_1)\otimes\RM(r_2, m_2) {\subseteq} \RM(r_1 + r_2, m_1 + m_2)
\end{equation}
with equivalence when either $(r_1, m_1)$ or $(r_2, m_2)$ is $(0, 0)$.

Conversely, while an RM codeword in $\RM(r, m)$ may not always be expressible as a tensor product of RM codewords of $\RM(r_1, m_1)$ and $\RM(r_2, m_2)$ for $ r_1< r, r_2\leq r ~\mathrm{or}~ r_1\leq r, r_2<r$ and $m_1 + m_2 = m$, the codeword can be written as a linear sum of tensor product of such codewords.
\begin{equation}
    \!\mathrm{Eval}^{(m_1 \!+\! m_2)}(f)\!\! =\!\! \bigoplus_{i}\! \mathrm{Eval}^{(m_1)}(x_{A_i})\otimes \mathrm{Eval}^{(m_2)}(x_{B_i}) 
\end{equation}
where $f = \sum_i x_{A_i}x_{B_i}$ for some index sets $A_i \subseteq [m_1], B_i \subseteq [m_2]$ determined from $f$. In particular, since codewords of $\RM(r, m)$ are given by evaluation vectors of polynomials $f$ of degree $\leq r$, they can be written as a linear sum of codewords from $\RM(\min(r, m_1), m_1)\otimes \RM(\min(r, m_2), m_2)$ as the defining sets $A_i$ ($B_i$) can only have $\leq \mathrm{min}(r, m_1)$ ($\leq \mathrm{min}(r, m_2)$) index elements. For $r < m_1 + m_2$, we have
\begin{align}
    &\RM(r, m_1 + m_2) \nonumber\\
    &~~~{\subseteq} \RM(\min(r, m_1), m_1)\otimes \RM(\min(r, m_2), m_2)\label{eqn:rm_in_tprm}
\end{align}
{with equivalence when $r = 0$.}
Later in Section~\ref{subsec:TPC_QRM_compare}, we use equations \eqref{eqn:tprm_in_rm} and \eqref{eqn:rm_in_tprm} to discuss the relationship between the RM codes and the RM TPCs.


\subsection{CSS codes and entanglement-assistance} \label{subsec:csscodes}
A quantum code {can be} constructed using two classical codes $C_1, C_2$ using the CSS code construction\cite{CSS_CS}. We denote this as $\mathrm{CSS}(C_1, C_2)$. This construction requires that the two codes satisfy the dual-containing criteria, i.e., {$C_1^{\perp} \subseteq \mathrm{C_2}$} and $H_1H_2^{\T} = 0$, where $H_1$ and $H_2$ are parity check matrices of $C_1$ and $C_2$. An encoded quantum state is given by
\begin{equation}\label{def:csscode}
    {\ket{\bm{x}\oplus C_1^\perp} = \frac{1}{\sqrt{|C_1^\perp|}}\sum\limits_{\bm{c} \in C_1^\perp} \ket{\bm{x} \oplus \bm{c}}, ~\bm{x} \in C_2},
\end{equation}
where {$|C_1^\perp| = 2^{ n -k_1}$}.
The codewords of the CSS code are given by a superposition of all codewords in a coset of the {codespace $C_1^\perp$ in $C_2$}.

The parity check matrix of the resulting CSS code has the form: 
\begin{equation}\label{def:CSScheckmatrix}
    H_{CSS}=\left[\begin{array}{c|c} H_1 & \mathbf{0} \\ \mathbf{0} & H_2\end{array}\right],
\end{equation}
where $H_1, H_2$ are the parity check matrices of the classical codes $C_1, C_2$. The vertical line is used to {split each row into equal halves, dividing the columns into two sets of the same size}. The rows of $H_{CSS}$ describe stabilizer operators $\hat S$ such that $\hat S\ket{\psi}= \ket{\psi}$ for $\ket{\psi} \in \mathrm{CSS}(C_1, C_2)$. $\mathrm{CSS}(C_1, C_2)$ denotes the subspace of the Hilbert space invariant by the action of these operators. This space is said to be \textit{stabilized} by the stabilizer operators $\hat S$. For a CSS code over $n$ qubits, $C_1, C_2$ are codes in $\mathbb{F}_2^n$ and a row of the parity check matrix $\bm r = (\bm x, \bm z)$ of length $2n$ corresponds to the stabilizer {$\hat S(\bm r) = \mathrm{i}^{\bm x\cdot \bm z}\hat X^{x_0}\hat Z^{z_0}\otimes \hat X^{x_1}\hat Z^{z_1}\otimes \cdots \otimes \hat X^{x_{n-1}}\hat Z^{z_{n-1}}$}, i.e., a binary to Pauli isomorphism from a binary tuple of length $2n$ to a {Hermitian} Pauli operator comprising $X$ and $Z$ components over each qubit, {where $\bm x\cdot \bm z$ denotes the scalar product of $\bm x$ and $\bm z$ and $\mathrm{i} = \sqrt{-1}$}.

However, when the dual-containing criteria is not satisfied {($C_1^{\perp}{\not\subseteq} C_2$)} and the CSS code construction fails, we need to use the EA CSS construction \cite{Wilde08}. An EA CSS code constructed from an $[n,k,d]$ code is an $[[n, 2k-n + n_e, \geq d; n_e]]$ quantum code (or $[[n, n - 2\rho+ n_e, \geq d; n_e]]$ quantum code), which requires $n_e = \mathrm{gfrank}(HH^{\mathrm{T}})$ entangled qubit pairs, where $\rho = n-k$ and $H$ is the parity check matrix of the classical code, and $\mathrm{gfrank(\cdot)}$ is the rank of the matrix over the Galois field $\mathbb{F}_2$. 

At the transmitter end, $n - 2\rho + n_e$ information qubits are encoded onto $n$ qubits, and these $n$ qubits are transmitted in a single channel use. $n_e$ qubits of the $n-2\rho + n_e$ logical qubits\footnote{{The number of logical qubits is the number of message qubits transmitted when a codeword is transmitted through the channel.}} are from the shared entangled pairs. These can be generated \textit{a~ priori} and shared across the transceiver, without adding runtime latencies. The entangled qubits are used to systematically restore the dual-containing criteria in a higher-dimensional codespace whose stabilizers are obtained by optimally extending the operators obtained from the classical parity check matrix using symplectic Gram-Schmidt orthogonalization \cite{Brun06} \cite{Wilde08}\cite{Nadkarni_NBEASC}.

\subsection{Classical and Quantum Tensor Product Codes} \label{subsec:cqtpc}
A classical TPC \cite{Classical_TPC} obtained from two classical codes $C_1[n_1,k_1,d_1]$ and $C_2[n_2,k_2,d_2]$ whose parity check matrices are $H_1$ and $H_2$ is the code whose parity check matrix is $H = H_1 \otimes H_2$. The size of $H_i$ is $\rho_i \times n_i$, for $i=1,2$, where $\rho_i = n_i-k_i$. The distance of the TPC is known to be the minimum of $d_1$ and $d_2$. The parameters of the TPC $C$ is
\begin{align*}
{[n_1n_2, n_1n_2 - \rho_1\rho_2, \mathrm{min}(d_1,d_2)]}.
\end{align*}

Quantum TPCs can be constructed from classical TPCs using the CSS construction when the classical TPC is a dual-containing code and using EA CSS construction when the classical TPC is not a dual-containing code\cite{CodingAnalogSuperadditivity} \cite{EABQTPC}\cite{QuantumTPC_Fan}. We note that when the classical TPC and its component codes are defined over the same field, the classical TPC is a dual-containing code if and only if one of its component codes is a dual-containing code as $HH^{\T} = H_1H_1^{\T} \otimes H_2H_2^{\T}$\cite{EABQTPC}.

The code parameters of the quantum TPC $\mathcal{C}_{\mathrm{TPC}}$ constructed from $C_1$ and $C_2$ defined over $\mathbb{F}_2$ is \begin{align*}
    {[[n_1n_2, n_1n_2 - 2\rho_1\rho_2 {+ n_e}, \geq \mathrm{min}(d_1,d_2); n_e]]},
\end{align*}
where $n_1n_2$ is the {transmitter end qubits of the codeword}, $n_1n_2 - 2\rho_1\rho_2$ is the number of logical qubits, the code distance is at least $\mathrm{min}(d_1,d_2)$, and the number of pre-shared entangled qubits $n_e =  \mathrm{gfrank}(H_1H_1^{\T})\mathrm{gfrank}(H_2H_2^{\T})$. {The length of the code is $n_1n_2+n_e$. We note that $n_e=0$ when the classical TPC is a dual-containing code.} We note that $n_e = n_{e1}n_{e2}$, where $n_{ei} = \mathrm{gfrank}(H_iH_i^{\T})$ is the minimal number of pre-shared entangled qubit required to construct the EA CSS code from the classical component code $C_i$, for $i=1,2$.

\subsection{Various Interpretations of Coding Rates of Entanglement-Assisted CSS Codes}
In an EA CSS code {defined by $s$ independent stabilizer generators}, one qubit of the pre-shared entangled state is assumed to be at the receiver end maintained error-free and the other qubit is a part of the $n$ qubits of the $(n+n_e)$ qubit codeword at the transmitter end. This leads to the following interpretations of the coding rates:
\begin{itemize}
\item[1)] Entanglement-assisted rate: This is the ratio of the number of logical qubits to the number of qubits transmitted across the quantum channel, i.e., $\mathcal{R}_E = \frac{n+n_e - {s}}{n}$. It is assumed here that the entanglement between the sender and receiver is free \cite{Wilde08}, where the entangled qubits were shared between the transmitter and the receiver \textit{a~priori} from a Bell pair source as shown in Figure \ref{fig:EA}(a).
\item[2)] Trade-off rate: This is the pair of ratios, {given by the ratio of the number of logical qubits to the number of qubits transmitted across the quantum channel and the number of pre-shared entangled pairs to the number of qubits transmitted across the quantum channel,} i.e., $\mathcal{R}_T = (\frac{n+n_e - {s}}{n}, \frac{n_e}{n})$. The first term of the pair is the rate at which noiseless qubits are generated, and the second term of the pair is the rate at which entangled qubits are consumed \cite{Wilde08}. 
\item[3)] Catalytic rate: This is the ratio of the difference in the number of logical qubits and the number of pre-shared entangled pairs to the number of qubits transmitted across the quantum channel, i.e., $\mathcal{R}_C = \frac{n + n_e - {s} - n_e}{n} = \frac{n - {s}}{n}$. It is assumed here that the receiver end qubits of the pre-shared entangled pairs {are} built up at the expense of the transmitted logical qubits. {We illustrate this in Figure~\ref{fig:EA}(b). The receiver end qubits at time instant $t$ are given by $\ket{\eta}_{R,t}^{\otimes n_e}$. They belong to the Bell pairs $\ket{\eta}_{t}^{\otimes n_e}$, and are considered to be transmitted along with the message qubits of the previous codeword at time $(t-1)$. Through the encoding and error correction of the codeword at time $(t-1)$, the receiver end qubits $\ket{\eta}_{R,t}^{\otimes n_e}$ are maintained error-free and used for decoding by the receiver.}
\end{itemize}

\begin{figure*}[t]
    \centering
    \includegraphics[width = \textwidth]{./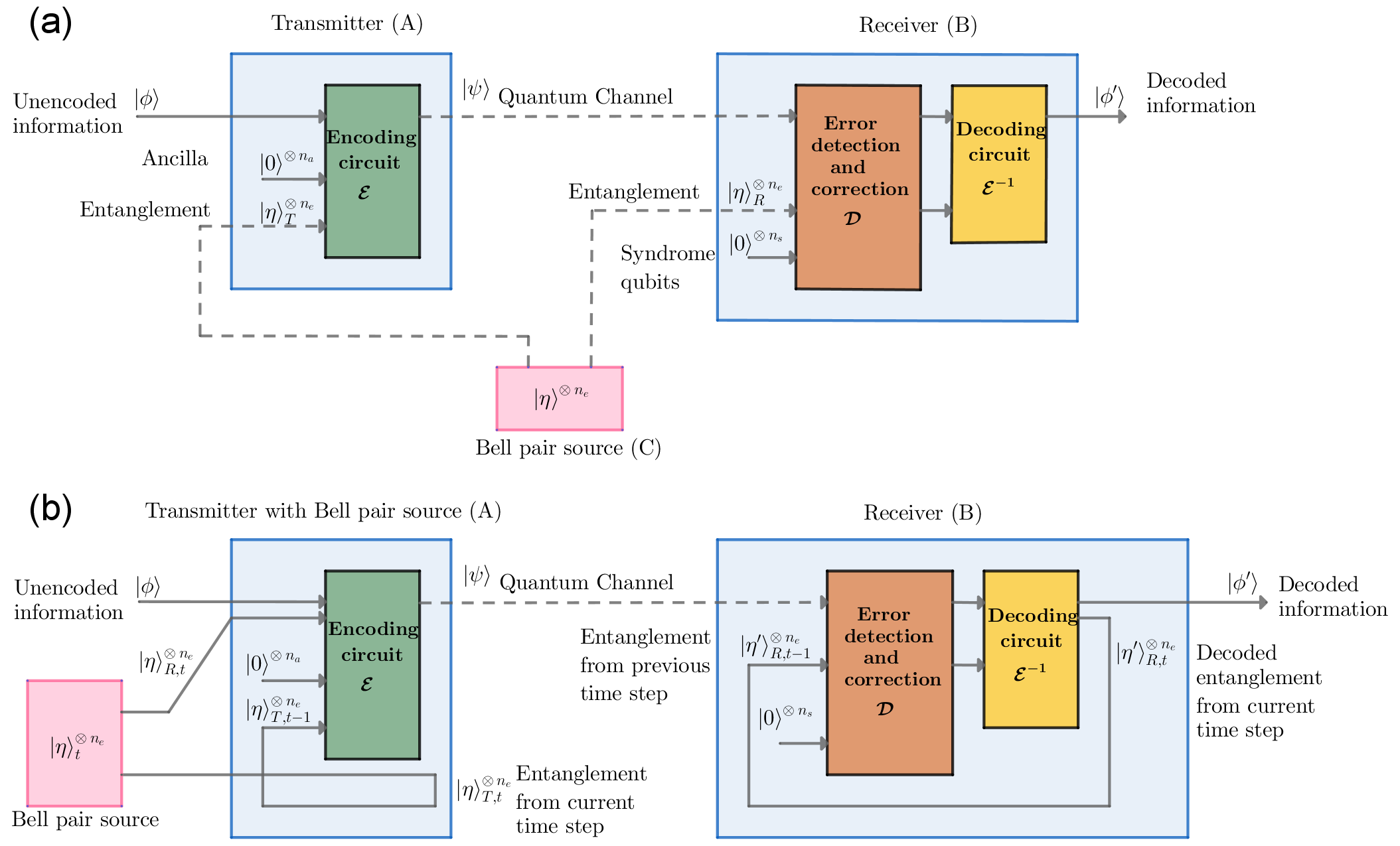}
    \caption{
    Entanglement-assisted quantum error correction schemes between the transmitter Alice (A) and the receiver Bob (B), respectively, when (a) the Bell pairs $\ket{\eta}^{\otimes n_e}$ are freely available between the transmitter $\ket{\eta}_T^{\otimes n_e}$ and the receiver $\ket{\eta}_R^{\otimes n_e}$ which was generated and distributed by a third party Charlie (C) beforehand, where the subscript T (R) denotes transmitter (receiver) qubits of the Bell pairs{;} and (b) entanglement between the transmitter and receiver is catalytically built up by generating $n_e$ Bell pairs $\ket{\eta}_{t}^{\otimes n_e}$ in each time step $t$. {At time step $(t-1)$, this scheme consumes} $n_e$ qubits of the unencoded information to store {the receiver end} qubits {of the} $n_e$ Bell pairs, denoted by $\ket{\eta}_{R, t}^{\otimes n_e}$, in a reliable fashion. {The codeword is obtained at the transmitter by encoding $\ket{\eta}_{R, t}^{\otimes n_e}$ with the rest of the $n_e$ qubits $\ket{\eta}_{T, t-1}^{\otimes n_e}$ of the shared Bell pairs from time step $(t-1)$ which are used to obtain the codeword along with $n_a$ ancilla qubits in state $\ket{0}$ and the unencoded $(n-2\rho)$-qubit message information $\ket{\phi}$.} The error correction and decoding at time step $t$ uses the receiver end qubits $\ket{\eta'}_{R, t-1}^{\otimes n_e}$ of the Bell pairs obtained from the decoded information from the codeword at time step $(t-1)$. We note that $\ket{\eta'}_{R, t-1}^{\otimes n_e}$ is used instead of $\ket{\eta}_{R, t-1}^{\otimes n_e}$ as it is an estimate of the unencoded information. For both cases, $n_s$ syndrome qubits prepared in state $\ket{0}$ are required at the receiver end to detect and correct errors in transmission.}
    \label{fig:EA}
\end{figure*}
\subsection{Coding Analog of Superadditivity}
  {Let $C_1[n_1,k_1,d_1]$ and $C_2[n_2,k_2,d_2]$ be the classical component codes from which the EA CSS codes and EA TPCs are constructed. Let $\rho_i = n_i-k_i$, for $i = 1,2$.}
  The coding rate of the classical {TPC based on $C_1$ and $C_2$}, i.e, $1 - \frac{\rho_1\rho_2}{n_1n_2}$, is greater than the coding rates of the classical component codes {$C_1$ and $C_2$}, i.e., $1-\frac{\rho_1}{n_1}$ and $1-\frac{\rho_2}{n_2}$, as $\rho_1 < n_1$ and $\rho_2 < n_2$. We next provide a similar result for the coding rate of quantum{/EA} TPCs compared to the coding rate of the CSS{/}EA CSS codes obtained from the classical TPC's component codes.
  \begin{lemma}\label{lem:EATPC_CodingAnalog}
      The coding rate/entanglement-assisted rate of the quantum{/EA} TPC, i.e, $1+\frac{n_e-2\rho_1\rho_2}{n_1n_2}$, is at least the maximum of the coding rates of the CSS/EA CSS codes obtained from the classical component codes, i.e., $1 {+} \frac{n_{e1} - 2\rho_1}{n_1}$ and $1 {+} \frac{n_{e2} - 2\rho_2}{n_2}$, respectively. When $n_{e1} = \mathrm{gfrank}(H_1H_1)^{\mathrm{T}} < \rho_1$ and $n_{e2} = \mathrm{gfrank}(H_2H_2)^{\mathrm{T}} < \rho_2$, the coding rate/entanglement-assisted rate of the quantum{/EA} TPC is greater than the coding rates of the {CSS/}EA CSS codes based on the classical TPC's component codes.
  \end{lemma}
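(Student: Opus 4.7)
The plan is to prove $\mathcal{R}_{\mathrm{TPC}} \geq \mathcal{R}_1$ and, by symmetry in the component indices, $\mathcal{R}_{\mathrm{TPC}} \geq \mathcal{R}_2$, from which the max-inequality follows immediately. For the strict-inequality part under the additional hypothesis $n_{e1} < \rho_1$ and $n_{e2} < \rho_2$, I will keep track of which of the bounds used can be upgraded to strict inequalities.

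The first step is a routine algebraic reduction: substituting $n_e = n_{e1}n_{e2}$ into the TPC rate and clearing the strictly positive denominator $n_1 n_2$, the inequality $\mathcal{R}_{\mathrm{TPC}} \geq \mathcal{R}_1$ becomes equivalent to
\begin{equation*}
n_{e1}(n_2 - n_{e2}) \;\leq\; 2\rho_1(n_2 - \rho_2).
\end{equation*}
Once this reduction is in hand, the rest of the argument is a short chain of two elementary bounds.

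For the chain, I would invoke two structural facts. The first is the universal bound
$n_{ei} = \mathrm{gfrank}(H_i H_i^{\mathrm{T}}) \leq \mathrm{rank}(H_i) = \rho_i$ valid for any parity-check matrix over $\mathbb{F}_2$; applied to $i=1$ this gives $n_{e1}(n_2 - n_{e2}) \leq \rho_1(n_2 - n_{e2})$. The second is the requirement that the EA CSS code built from the component $C_2$ be well-defined, i.e.\ have a non-negative number of logical qubits, which translates to $n_2 - 2\rho_2 + n_{e2} \geq 0$, or equivalently $n_2 - n_{e2} \leq 2(n_2 - \rho_2)$. Multiplying the latter by $\rho_1 > 0$ and chaining with the former yields
\begin{equation*}
n_{e1}(n_2 - n_{e2}) \;\leq\; \rho_1(n_2 - n_{e2}) \;\leq\; 2\rho_1(n_2 - \rho_2),
\end{equation*}
which is exactly the reduced inequality. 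The case $i = 2$ is identical after swapping indices.

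For the strict-inequality claim, note that $n_{e2} \leq \rho_2 < n_2$ always (any non-trivial component code has $\rho_2 < n_2$), so $n_2 - n_{e2} > 0$. Hence the hypothesis $n_{e1} < \rho_1$ upgrades the first link of the chain to a strict inequality, giving $\mathcal{R}_{\mathrm{TPC}} > \mathcal{R}_1$; symmetrically $n_{e2} < \rho_2$ yields $\mathcal{R}_{\mathrm{TPC}} > \mathcal{R}_2$, and together these give strict inequality against the maximum. The main non-computational point -- rather than a calculational obstacle -- is recognizing that the second structural fact ($n_{ei} \geq 2\rho_i - n_i$) is implicit in treating the component EA CSS codes as well-defined quantum codes; once this is acknowledged, everything else is a two-line chain of elementary bounds.
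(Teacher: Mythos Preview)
Your proposal is correct and follows essentially the same approach as the paper: both arguments reduce $\mathcal{R}_{\mathrm{TPC}}\ge\mathcal{R}_1$ to the inequality $n_{e1}(n_2-n_{e2})\le 2\rho_1(n_2-\rho_2)$ and establish it via the two structural facts $n_{e1}\le\rho_1$ and $n_2-2\rho_2+n_{e2}\ge 0$, with the strict case following from $n_{e1}<\rho_1$. The only cosmetic difference is ordering---you reduce backward and then chain, while the paper starts from $n_2+n_{e2}-2\rho_2\ge 0$ and works forward---and you are slightly more explicit in noting $n_2-n_{e2}>0$ is needed for the strict step.
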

  \begin{proof}
   We note that the number of logical qubits of the EA CSS code is $(n_i + n_{ei} - 2\rho_i) \geq 0$ for $i=1,2$. Thus, we obtain
  \begin{align}
  & n_2 + n_{e2} - 2\rho_2 \geq 0 \Rightarrow \rho_1(n_2 + n_{e2} - 2\rho_2) \geq 0,\nonumber\\
  \Rightarrow&2\rho_1(n_2 - \rho_2) -\rho_1(n_2 - n_{e2}) \geq 0,\nonumber\\
  \Rightarrow&2\rho_1(n_2 - \rho_2) -n_{e1}(n_2 - n_{e2}) \geq 0, \nonumber\\
  &~~~~~~~~~~~~~~~~~~~~~(\because n_{e1} =\mathrm{gfrank}(H_1H_1^{\mathrm{T}}) \leq \rho_1 )\nonumber\\
  \Rightarrow& n_2(2\rho_1 - n_{e1}) \geq  2\rho_1\rho_2 -n_{e1}n_{e2}, \nonumber\\
  \Rightarrow& \frac{(2\rho_1 - n_{e1})}{n_1} \geq  \frac{2\rho_1\rho_2 -n_{e1}n_{e2}}{n_1n_2}, \nonumber\\
  \Rightarrow& 1 - \frac{(2\rho_1 - n_{e1})}{n_1} \leq 1- \frac{2\rho_1\rho_2 -n_{e1}n_{e2}}{n_1n_2}, \nonumber
  \end{align}
  \begin{align}
  \Rightarrow& \frac{n_1 \!+\! n_{e1}\! -\! 2\rho_1}{n_1} \leq \frac{n_1n_2\! +\! n_{e1}n_{e2}\! -\! 2\rho_1\rho_2}{n_1n_2}. \label{eqn:CodingAnalog1}
  \end{align}
    We note that the inequality in equation \eqref{eqn:CodingAnalog1} becomes a strict inequality when $n_{e1} =\mathrm{gfrank}(H_1H_1^{\mathrm{T}}) < \rho_1$.
  Similarly, we have 
  \begin{align}
      \frac{n_2 + n_{e2} - 2\rho_2}{n_2} \leq \frac{n_1n_2 + n_{e1}n_{e2} - 2\rho_1\rho_2}{n_1n_2}. \label{eqn:CodingAnalog2}
  \end{align}
  From equations \eqref{eqn:CodingAnalog1} and \eqref{eqn:CodingAnalog2}, we obtain
  \begin{align}
      \!&\frac{n_1n_2 + n_{e1}n_{e2} - 2\rho_1\rho_2}{n_1n_2} \nonumber\\
      \!&\geq \mathrm{max}\!\left(\frac{n_1 + n_{e1} - 2\rho_1}{n_1}, \frac{n_2 + n_{e2} - 2\rho_2}{n_2}\right)\!. \label{eqn:CodingAnalog}
  \end{align}
  When $n_{e1} =\mathrm{gfrank}(H_1H_1^{\mathrm{T}}) < \rho_1$ and $n_{e2} =\mathrm{gfrank}(H_2H_2^{\mathrm{T}}) < \rho_2$, the inequalities in equations \eqref{eqn:CodingAnalog1} and \eqref{eqn:CodingAnalog2} become strict inequalities, and the coding rate/entanglement-assisted rate of the quantum{/EA} TPC is greater than those of the {CSS/}EA CSS codes based on the classical component codes, i.e., $\frac{n_1 + n_{e1} - 2\rho_1}{n_1}< \frac{n_1n_2 + n_{e1}n_{e2} - 2\rho_1\rho_2}{n_1n_2}$ and $\frac{n_2 + n_{e2} - 2\rho_2}{n_2} < \frac{n_1n_2 + n_{e1}n_{e2} - 2\rho_1\rho_2}{n_1n_2}$. 
  \end{proof}
  \begin{corollary}\label{coro:CodingLawPossibility}
  Let the coding rate/entanglement-assisted rate of EA CSS codes constructed from classical codes $C_1$ and $C_2$ be $0$. {If $C_1$ and $C_2$ have positive coding rate}, then the quantum{/EA} TPC constructed considering $C_1$ and $C_2$ as classical component codes have positive rate.
  \end{corollary}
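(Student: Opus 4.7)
The plan is to derive this corollary as an immediate specialization of the preceding lemma, which already does the real work. The lemma supplies two chains of implications leading to the non-strict inequalities \eqref{eqn:CodingAnalog1} and \eqref{eqn:CodingAnalog2}, and also records that each of these inequalities strengthens to a strict inequality under the respective rank condition $n_{ei} = \mathrm{gfrank}(H_iH_i^{\T}) < \rho_i$. Both hypotheses of the lemma's strict version therefore hold under the corollary's assumptions.

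First I would translate the hypothesis ``the EA CSS coding/entanglement-assisted rate constructed from $C_i$ is $0$'' into the arithmetic identity $n_i + n_{ei} - 2\rho_i = 0$ for each $i \in \{1,2\}$. This makes both terms inside the $\max(\cdot,\cdot)$ on the right-hand side of equation \eqref{eqn:CodingAnalog} equal to $0$. Second, I would note that the two rank conditions $\mathrm{gfrank}(H_iH_i^{\T}) < \rho_i$ are exactly the strict-inequality hypotheses of the lemma. Invoking the lemma under these strengthened conditions yields
\begin{equation*}
\frac{n_1n_2 + n_{e1}n_{e2} - 2\rho_1\rho_2}{n_1n_2} \;>\; \max\!\left(\frac{n_1+n_{e1}-2\rho_1}{n_1},\, \frac{n_2+n_{e2}-2\rho_2}{n_2}\right) \;=\; 0,
\end{equation*}
which is precisely the statement that the quantum TPC built from $C_1$ and $C_2$ has positive coding/entanglement-assisted rate.

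There is essentially no obstacle: the corollary is a direct substitution into the lemma. The one subtle point worth flagging in the write-up is that the strict-inequality conclusion genuinely requires \emph{both} rank conditions simultaneously; if only one of $n_{e1} < \rho_1$ or $n_{e2} < \rho_2$ held, the lemma would still deliver only the non-strict bound of $0$ and positivity of the TPC rate could not be concluded in general. Since the corollary's hypothesis imposes both conditions, this subtlety is automatically taken care of, and the argument closes in a single step.
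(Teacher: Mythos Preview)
Your proposal is correct and matches the paper's approach exactly: the corollary is stated immediately after the lemma with no separate proof, so the intended argument is precisely the direct substitution you describe. Your added remark about needing both rank conditions is consistent with the lemma as stated, though one could note from the lemma's proof that a single strict rank condition already forces one of \eqref{eqn:CodingAnalog1}, \eqref{eqn:CodingAnalog2} to be strict, which under the zero-rate hypothesis would also suffice.
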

  \begin{proof}
      {From Lemma \ref{lem:EATPC_CodingAnalog}, if $C_1$ and $C_2$ satisfy $\mathrm{gfrank}(H_1H_1^{\mathrm{T}}) < \rho_1$ and $\mathrm{gfrank}(H_2H_2^{\mathrm{T}}) < \rho_2$, the quantum/EA TPC has a positive rate. As the coding rates/entanglement-assisted rates of the EA CSS codes are $0$, for $i=1,2$, $2k_i - n_i + \mathrm{gfrank}(H_1H_1^{\mathrm{T}}) = 0 \Rightarrow \mathrm{gfrank}(H_1H_1^{\mathrm{T}}) = \rho_i - k_i < \rho_i$ when $k_i > 0$. Thus, when $C_1$ and $C_2$ have positive coding rate, i.e., $k_1/n_1, k_2/n_2 > 0$, the quantum{/EA} TPC constructed has a positive rate.}
  \end{proof}
  
  From Corollary \ref{coro:CodingLawPossibility}, we note that we can construct positive rate EA TPCs from classical TPCs whose component codes {have positive coding rate} and yield zero-rate EA CSS codes. This phenomenon is known as the coding law of superadditivity, that has been first reported in \cite{CodingAnalogSuperadditivity}. We show in Section \ref{sec:EARM} that the EA CSS codes obtained from classical RM codes satisfy $n_{e1} < \rho_1$, $n_{e2} < \rho_2$, and $k_1, k_2 >0$; hence, the EA TPC obtained from these RM codes have positive coding rate as proved rigorously in Section \ref{sec:EATPC_RM}.

  {We note that positive entanglement-assisted rate of a code does not ensure positive catalytic rate. In Corollary \ref{coro:EACSS_positive_catalytic}, we provide a condition for the classical component codes of the EA TPC to satisfy that ensures its positive catalytic rate.}

{\begin{corollary}[Positive catalytic rate of EA TPC] \label{coro:EACSS_positive_catalytic}
    Let the entanglement-assisted rate of EA CSS codes constructed from $C_1$ and $C_2$ be $0$. Let $R_1$ and $R_2$ be the coding rates of the classical codes $C_1$ and $C_2$. Then, the EA TPC constructed from the TPC based on $C_1$ and $C_2$ has a positive catalytic rate when $R_1 + R_2 > R_1R_2 + 0.5$.
\end{corollary}} 

\begin{proof}
    We note that when the EA CSS codes have zero coding rates, $n_i + n_{ei} = 2(n_i - k_i) \Rightarrow n_{ei} = n_i - 2k_i$, $i = 1,2$. The catalytic rate of the EA TPC code is $\frac{(n_1n_2 - 2(n_1-k_1)(n_2-k_2))}{n_1n_2} = \frac{2n_1k_2 + 2n_2k_1 - 2k_1k_2 - n_1n_2}{n_1n_2} = 2R_2 + 2R_1 - 2R_1R_2 - 1$.  For $i=1,2$, let $R_i = k_i/n_i$ be the coding rate of the classical component code $C_i$. The EA TPC has a positive catalytic rate when
    \begin{align}
        &2R_1 + 2R_2 - 2R_1R_2 -1 > 0,\nonumber\\
    \Rightarrow & R_1 + R_2 > R_1R_2 + 0.5.\nonumber \qedhere
    \end{align}
\end{proof}

\begin{figure}[t]
    \centering
    \includegraphics[width = \columnwidth]{./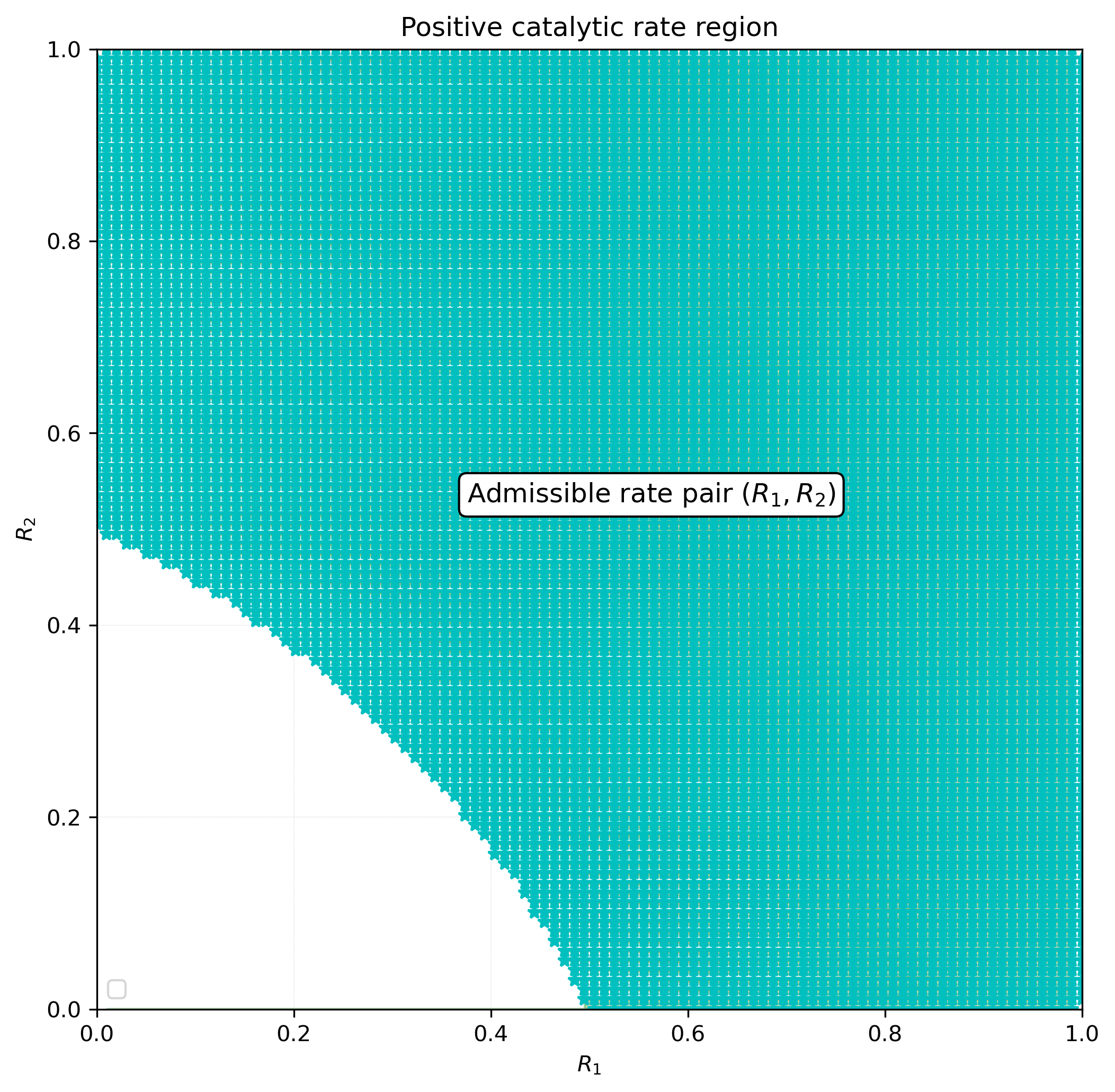}
    \caption{Admissible rate pairs for positive catalytic rate of EA TPC codes. When both the coding rates are less than 0.5, there is a admissible region for positive catalytic rates as shown in the shaded region. When one of the component coding rates is $>0.5$, the inequality is always valid.}
    \label{fig:pcr}
\end{figure}

{Figure \ref{fig:pcr} shows the admissible rate pairs to achieve positive catalytic rates for EA TPCs based on the conditions in Corollary \ref{coro:EACSS_positive_catalytic}. When one of the component coding rates is $>0.5$, the inequality is always true, implying a wide range of rate pairs that yield positive catalytic rate, useful to practice.}

\section{Entanglement-assisted Reed-Muller Code} \label{sec:EARM}
For RM codes, since $\RM(r, m)^{\perp} = \RM(m-r-1, m)$, we can construct CSS codes using $C_1 = \RM(r_1, m)$ and $C_2 = \RM(r_2, m)$ for $m - r_2 - 1 \leq r_1$. When $r_1 = r_2 = r$, this condition becomes $r \geq \frac{m-1}{2}$. We refer to this CSS code as the Quantum Reed-Muller code $\mathrm{QRM}(r, m)$.
For $r < \frac{m-1}{2}$, we require to use the entanglement-assisted CSS construction to obtain the {EA RM code}. {Throughout this paper, we call the quantum RM codes that do not use entanglement-assistance as QRM codes.} 

\subsection{Number of pre-shared entangled qubits, $n_e$}
The number of pre-shared entangled qubits $n_e$ required to construct the entanglement-assisted CSS code using a classical code that has the check matrix $H$ is given by the expression $n_e = \mathrm{gfrank}(HH^\T)$.

\begin{lemma}\label{lem:rank_Delta_DeltaT}
Let $\Delta(r, m)$ be the basis of $\RM(m-r-1,m)/\RM(r,m)$, then, $\Delta(r, m)\Delta(r, m)^\T$ has full rank.
\end{lemma}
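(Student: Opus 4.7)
My plan is to view $\Delta(r,m)\Delta(r,m)^{\T}$ as the Gram matrix, with respect to the standard bilinear form $B(u,v) := u v^{\T}$ on $\mathbb{F}_2^{2^m}$, of the rows of $\Delta(r,m)$. Writing $V := \mathrm{rowspan}(\Delta(r,m))$, a standard computation over $\mathbb{F}_2$ gives
\begin{equation*}
\mathrm{gfrank}\bigl(\Delta(r,m)\Delta(r,m)^{\T}\bigr) = \mathrm{gfrank}(\Delta(r,m)) - \dim\bigl(V \cap V^{\perp}\bigr).
\end{equation*}
Since the rows of $\Delta(r,m)$ form a basis of $V$ (a complement of $\RM(r,m)$ in $\RM(m-r-1,m)$) by hypothesis, they are linearly independent, so $\mathrm{gfrank}(\Delta(r,m)) = |\Delta(r,m)|$ automatically. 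The lemma therefore reduces to showing that $B|_V$ is non-degenerate, i.e., $V \cap V^{\perp} = 0$.

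The substantive step is to identify the radical of $B$ on the larger space $\RM(m-r-1,m)$. In the regime of interest $r < (m-1)/2$, we have $r \le m-r-1$, so the duality $\RM(r,m)^{\perp} = \RM(m-r-1,m)$ forces $\RM(r,m) \subseteq \RM(r,m)^{\perp}$, i.e., $\RM(r,m)$ is self-orthogonal. Consequently, the radical of $B$ restricted to $\RM(m-r-1,m)$ is
\begin{equation*}
\RM(m-r-1,m) \cap \RM(m-r-1,m)^{\perp} = \RM(r,m)^{\perp} \cap \RM(r,m) = \RM(r,m),
\end{equation*}
and by construction of $\Delta(r,m)$ we have the direct-sum decomposition $\RM(m-r-1,m) = \RM(r,m) \oplus V$, so $V$ is a complement of this radical inside $\RM(m-r-1,m)$.

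Finally, I would transfer non-degeneracy to $B|_V$ directly. Suppose $v \in V$ satisfies $B(v,w) = 0$ for all $w \in V$. For any $u \in \RM(m-r-1,m)$, decompose $u = u_0 + w$ with $u_0 \in \RM(r,m)$ and $w \in V$; self-orthogonality gives $B(v,u_0) = 0$, and the hypothesis gives $B(v,w) = 0$, so $B(v,u) = 0$. Hence $v$ lies in the radical $\RM(r,m)$; combined with $v \in V$ and $V \cap \RM(r,m) = 0$, this forces $v = 0$. Thus $B|_V$ is non-degenerate, and $\Delta(r,m)\Delta(r,m)^{\T}$ has full rank. The main obstacle is purely conceptual: separating the degenerate form on $\RM(m-r-1,m)$ (whose radical is $\RM(r,m)$) from its non-degenerate restriction to any complement $V$; once this split is set up correctly, the calculation is automatic, and the duality $\RM(r,m)^{\perp} = \RM(m-r-1,m)$ already recorded in the preliminaries handles the rest.
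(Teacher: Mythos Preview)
Your proof is correct and is essentially the same argument as the paper's, only packaged in cleaner bilinear-form language. The paper argues by contradiction: a nontrivial kernel vector of $\Delta\Delta^{\T}$ yields a nonzero $v\in V$ with $\Delta v^{\T}=0$; combining this with $\RM(r,m)\,v^{\T}=0$ (from $\RM(m-r-1,m)^{\perp}=\RM(r,m)$) places $v$ in the null space of all of $\RM(m-r-1,m)$, i.e., in $\RM(r,m)$, contradicting $V\cap\RM(r,m)=0$. Your Gram-matrix identity $\mathrm{gfrank}(\Delta\Delta^{\T})=\dim V-\dim(V\cap V^{\perp})$ and the radical computation $\RM(m-r-1,m)\cap\RM(m-r-1,m)^{\perp}=\RM(r,m)$ are exactly the same two ingredients, just stated abstractly; the final step $V\cap V^{\perp}\subseteq\RM(r,m)\cap V=0$ is identical in content to the paper's contradiction.
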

\begin{proof}
  We note that $\Delta(r, m)$ corresponds to the basis vectors of the quotient space $\mathrm{RM}(m-r-1,m)/\mathrm{RM}(r,m)$. Thus, $\Delta(r,m)$ comprises of $\displaystyle \sum_{i=r+1}^{m-r-1}\begin{pmatrix}m\\ i\end{pmatrix} $ linearly independent rows, and is not self-dual containing \cite{RankRef}. Let us consider the rows of $\Delta(r, m)$ to be $\bar{d}_1$, \dots, $\bar{d}_l$, where $l = \underset{i=r+1}{\overset{m-r-1}{\sum}} {m \choose i}$. If the matrix $\Delta(r, m)\Delta(r, m)^\T$ is not a full rank matrix, then there exist $g$ rows that are linearly dependent for some $g \leq l$. Let the $i_1$, $\dots$, $i_g$ rows be linearly dependent. We note that the $(s,t)^{\mathrm{th}}$ element of $\Delta(r, m)\Delta(r, m)^\T$ is $\bar{d}_{s}\bar{d}_{t}^{\mathrm{T}}$. Then, there exists $b_m \in \mathbb{F}_2$, for $m=1,2,\dots,g$, where at least one $b_m$ is non-zero such that, for all $j \in \{1,\dots,l\}$,
  \begin{align}
        &\sum_{m=1}^{g} b_m (\Delta(r, m)\Delta(r, m)^\T)_{(i_m,j)} = 0,\nonumber\\
      \Rightarrow&\sum_{m=1}^{g} b_m \bar{d}_{i_m}\bar{d}_{j}^{\mathrm{T}} = 0~\Rightarrow\bar{d}_{j}\sum_{m=1}^{g} b_m \bar{d}_{i_m}^{\mathrm{T}} = 0,\nonumber\\
      \Rightarrow&\Delta(r, m)\sum_{m=1}^{g} b_m \bar{d}_{i_m}^{\mathrm{T}} = 0.
  \end{align}
  We note that $\Delta(r, m)\sum_{m=1}^{g} b_m \bar{d}_{i_m}^{\mathrm{T}} = 0$ means that $\sum_{m=1}^{g} b_m \bar{d}_{i_m}^{\mathrm{T}}$ is in the null space of $\Delta(r, m)$. Let $H^{\perp}(r,m)$ be the generator matrix of the $\RM(r,m)$ code. As $H^{\perp}(r,m)\Delta(r,m)^{\T} = 0$ ($\because \RM(m-r-1,m)^{\perp} = \RM(r,m)$), we obtain that $\sum_{m=1}^{g} b_m \bar{d}_{i_m}^{\mathrm{T}}$ is in the null space of $H(r,m) =\begin{bmatrix} H^{\perp}(r,m)\\\Delta(r,m)  \end{bmatrix}$. Thus, $\sum_{m=1}^{g} b_m \bar{d}_{i_m}^{\mathrm{T}} \in \mathrm{RM}(r,m)$, which is a contradiction because $\sum_{m=1}^{g} b_m \bar{d}_{i_m}^{\mathrm{T}} \in \mathrm{RM}(m-r-1,m)/\mathrm{RM}(r,m) \subset \mathrm{RM}(m-r-1,m)\setminus\mathrm{RM}(r,m)$. Thus, $\Delta(r, m)\Delta(r, m)^\T$ has full rank.
\end{proof}

\begin{theorem}\label{thm:EARM_n_e}
 The entanglement-assisted Reed-Muller code obtained using CSS construction from the classical Reed-Muller code $\mathrm{RM}(r,m)$ with parity check matrix $H$ requires $n_e$ pre-shared entangled qubits, where
\begin{align*}
    n_e =  \mathrm{gfrank}(HH^{\mathrm{T}}) = \underset{i=r+1}{\overset{m-r-1}{\sum}} {m \choose i}.
\end{align*}
\end{theorem}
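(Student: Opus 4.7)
The plan is to exploit the duality structure of Reed-Muller codes to decompose $H$ into a self-orthogonal block plus a complement, which collapses most of $HH^{\mathrm{T}}$ to zero and lets Lemma \ref{lem:rank_Delta_DeltaT} finish the job.

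First I would recall that since $\mathrm{RM}(r,m)^{\perp} = \mathrm{RM}(m-r-1,m)$, the parity check matrix $H$ of $\mathrm{RM}(r,m)$ can be taken to be any generator matrix of $\mathrm{RM}(m-r-1,m)$. In the EA regime $r < (m-1)/2$ we have the strict containment $\mathrm{RM}(r,m) \subsetneq \mathrm{RM}(m-r-1,m)$, so I would pick a basis of $\mathrm{RM}(m-r-1,m)$ that extends a basis of $\mathrm{RM}(r,m)$. This gives the block decomposition
\begin{equation*}
    H = \begin{bmatrix} H^{\perp}(r,m) \\ \Delta(r,m) \end{bmatrix},
\end{equation*}
where $H^{\perp}(r,m)$ generates $\mathrm{RM}(r,m)$ (with $\sum_{i=0}^{r}\binom{m}{i}$ rows) and $\Delta(r,m)$ is the quotient basis (with $\sum_{i=r+1}^{m-r-1}\binom{m}{i}$ rows).

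Next I would compute $HH^{\mathrm{T}}$ block by block. The key observation is that every row of $H$ lies in $\mathrm{RM}(m-r-1,m)$, while every row of $H^{\perp}(r,m)$ lies in $\mathrm{RM}(r,m) = \mathrm{RM}(m-r-1,m)^{\perp}$. Therefore $H^{\perp}(r,m)\,H^{\mathrm{T}} = 0$, which forces both the top-left block $H^{\perp}(r,m)H^{\perp}(r,m)^{\mathrm{T}}$ and the off-diagonal block $H^{\perp}(r,m)\Delta(r,m)^{\mathrm{T}}$ to vanish; by symmetry the lower-left block vanishes too. Hence
\begin{equation*}
    HH^{\mathrm{T}} = \begin{bmatrix} 0 & 0 \\ 0 & \Delta(r,m)\Delta(r,m)^{\mathrm{T}} \end{bmatrix},
\end{equation*}
so $\mathrm{gfrank}(HH^{\mathrm{T}}) = \mathrm{gfrank}(\Delta(r,m)\Delta(r,m)^{\mathrm{T}})$.

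Finally I would invoke Lemma \ref{lem:rank_Delta_DeltaT}, which guarantees that $\Delta(r,m)\Delta(r,m)^{\mathrm{T}}$ has full rank equal to the number of its rows, namely $\sum_{i=r+1}^{m-r-1}\binom{m}{i}$. Stringing these together gives $n_e = \mathrm{gfrank}(HH^{\mathrm{T}}) = \sum_{i=r+1}^{m-r-1}\binom{m}{i}$, as claimed. The only subtle step is verifying that the block decomposition is legitimate (i.e., that such a basis-extending choice of $H$ is permitted without changing $\mathrm{gfrank}(HH^{\mathrm{T}})$); this follows because any two generator matrices of $\mathrm{RM}(m-r-1,m)$ are related by an invertible row transformation $M$, under which $HH^{\mathrm{T}} \mapsto M(HH^{\mathrm{T}})M^{\mathrm{T}}$, leaving the rank unchanged. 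I expect this bookkeeping, rather than any deep calculation, to be the only place where care is needed; the heart of the argument is the orthogonality collapse together with the already-established Lemma \ref{lem:rank_Delta_DeltaT}.
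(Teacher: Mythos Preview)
Your proposal is correct and follows essentially the same route as the paper: decompose $H$ into the $\mathrm{RM}(r,m)$ block and the quotient block $\Delta(r,m)$, use duality to zero out three of the four blocks of $HH^{\mathrm{T}}$, and then invoke Lemma~\ref{lem:rank_Delta_DeltaT}. Your additional remark that the rank of $HH^{\mathrm{T}}$ is invariant under the choice of generator matrix (via $H\mapsto MH$) is a nice justification that the paper leaves implicit.
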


\begin{proof}
For RM codes, we have $\mathrm{RM}(r_1, m) \subset \mathrm{RM}(r_2, m)$ for $r_1 < r_2$, and $\mathrm{RM}(r, m)^{\perp} = \mathrm{RM}(m-r-1, m)$. Thus, the rows of the parity check matrix $H(r, m)$ consists of the codewords of $\mathrm{RM}(m-r-1, m)$ and has dimensions $\sum_{i = 0}^{m-r-1} {m \choose i}$ {$\times 2^m$}.

When $r < \frac{m-1}{2}$, the parity check matrix of $\mathrm{RM}(r, m)$ can be written as
\begin{equation}
    H(r, m) = \begin{pmatrix} H^\perp (r, m) \\ \Delta (r, m)\end{pmatrix} = \begin{pmatrix}
        \RM(r, m) \\ \Delta (r, m)
    \end{pmatrix},
\end{equation}
where $\Delta (r, m) \equiv \RM(m-r-1) / \RM(r, m)$.

\noindent We note that $\RM(r, m)\RM(r, m)^\T = 0$ and $\RM(r, m)\Delta(r, m)^\T = 0$ as $\RM(m-r-1, m)$ is a dual-containing code. Thus, we obtain
\begin{align}
    &~~H(r, m) H(r, m)^\T \nonumber\\
    &~~= \begin{pmatrix}
        \RM(r, m)\RM(r, m)^\T & \RM(r, m)\Delta(r, m)^\T \\
        \Delta(r, m)\RM(r, m)^\T & \Delta(r, m)\Delta(r, m)^\T
    \end{pmatrix}\nonumber\\
    &~~= \begin{pmatrix}
        0 & 0 \\ 0 & \Delta(r, m)\Delta(r, m)^\T
    \end{pmatrix}\nonumber\\
    &\Rightarrow
    \mathrm{gfrank}(H(r,m)H(r,m)^{\mathrm{T}})\nonumber\\
    &~~~~= \mathrm{gfrank}(\Delta(r,m)\Delta(r,m)^{\mathrm{T}})\nonumber\\
    &~~~~= \underset{i=r+1}{\overset{m-r-1}{\sum}} {m \choose i},
\end{align}
because $\Delta(r,m)\Delta(r,m)^{\mathrm{T}}$ is a full rank matrix as shown in Lemma \ref{lem:rank_Delta_DeltaT}.
\end{proof}

\subsection{Zero-Coding Rate and Negative Catalytic Rate}
We note that we consider $\frac{m-1}{2} > r$ as we are computing the rate of the EA RM code.
We next show that the EA RM code has zero coding rate and a negative catalytic rate.

\begin{lemma} \label{lem:EARM_Zero_rate}
The EA RM codes constructed from $\mathrm{RM}(r,m)$ codes using the CSS construction have zero coding rate and negative catalytic rate.
\end{lemma}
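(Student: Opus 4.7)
The plan is to plug the explicit formulas $n = 2^m$, $\rho = \sum_{i=0}^{m-r-1}\binom{m}{i}$ (number of rows of the $\mathrm{RM}(r,m)$ parity-check matrix), and $n_e = \sum_{i=r+1}^{m-r-1}\binom{m}{i}$ (from Theorem~\ref{thm:EARM_n_e}) directly into the formulas for the coding rate $(n+n_e-2\rho)/n$ and the catalytic rate $(n-2\rho)/n$, and simplify using one elementary binomial identity.

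First I would record the key identity coming from the symmetry $\binom{m}{i}=\binom{m}{m-i}$. Partitioning $\{0,1,\dots,m\}$ into $[0,r]$, $[r+1,m-r-1]$, and $[m-r,m]$ (which are non-overlapping and all non-empty precisely because $r<(m-1)/2$ forces $r+1\le m-r-1$) gives
\begin{equation*}
    2^m \;=\; \sum_{i=0}^{r}\binom{m}{i} + \sum_{i=r+1}^{m-r-1}\binom{m}{i} + \sum_{i=m-r}^{m}\binom{m}{i} \;=\; 2\!\sum_{i=0}^{r}\binom{m}{i} + n_e,
\end{equation*}
where the second equality uses the binomial symmetry to identify the first and third sums. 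This single identity is the only tool I need.

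Next, to show the coding rate is zero, I would express $\rho$ using the same partition, $\rho=\sum_{i=0}^{r}\binom{m}{i}+\sum_{i=r+1}^{m-r-1}\binom{m}{i}=\sum_{i=0}^{r}\binom{m}{i}+n_e$, so that $2\rho = 2\sum_{i=0}^{r}\binom{m}{i}+2n_e = (2^m-n_e)+2n_e = n+n_e$. The number of logical qubits $n+n_e-2\rho$ is therefore exactly $0$, giving coding rate zero. For the catalytic rate, the same computation gives $n-2\rho = -n_e$, so $\mathcal R_C = -n_e/n$. Since the summation range $[r+1,m-r-1]$ defining $n_e$ is non-empty for $r<(m-1)/2$, we have $n_e>0$ and $\mathcal R_C<0$ strictly.

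The main obstacle is essentially a sanity check rather than a technical difficulty: one must verify that the hypothesis $r<(m-1)/2$ indeed guarantees $r+1\le m-r-1$ so that the middle block is non-empty (which both makes $n_e>0$ and legitimizes the three-way binomial split). Everything else is bookkeeping, so I would keep the write-up short and emphasize the clean final form $n+n_e-2\rho=0$ and $n-2\rho=-n_e$.
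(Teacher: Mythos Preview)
Your proof is correct and follows essentially the same approach as the paper: both arguments compute $n-2\rho+n_e$ using the partition of $\{0,\dots,m\}$ into $[0,r]$, $[r+1,m-r-1]$, $[m-r,m]$ together with the symmetry $\binom{m}{i}=\binom{m}{m-i}$, and both conclude negativity of the catalytic rate from $n_e>0$. Your write-up is somewhat more streamlined (you isolate the single identity $2^m = 2\sum_{i=0}^{r}\binom{m}{i}+n_e$ up front and read off $2\rho=n+n_e$ directly), whereas the paper expands the expression term-by-term before cancelling, but the underlying argument is the same.
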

\begin{proof}
Consider an $\mathrm{RM}(r,m)$ code with code parameters $[2^m, \underset{i=0}{\overset{r}{\sum}} {m \choose i}, 2^{m-r}]$ and {$\rho = n-k = \underset{j=0}{\overset{m-r-1}{\sum}} {m \choose j}$}. We first compute the number of logical qubits of the EA RM code constructed from the $\mathrm{RM}(r,m)$ code to be
\begin{align*}
&2k-n+n_e = n -2\rho + n_e,\nonumber\\
&=  2^m - 2\underset{i=0}{\overset{m-r-1}{\sum}} {m \choose i} + \underset{i=r+1}{\overset{m-r-1}{\sum}} {m \choose i},\nonumber\\
& =  \underset{i=0}{\overset{m}{\sum}}\! {m \choose i} \!\!-\!\!\!\!\! \underset{i=0}{\overset{m-r-1}{\sum}} \!{m \choose i} \!\!-\!\!\!\! \underset{i=0}{\overset{m-r-1}{\sum}}\! {m \choose m-i}\!\! +\!\!\!\! \underset{i=r+1}{\overset{m-r-1}{\sum}} \!{m \choose i},\nonumber\\ &~~~~~~\left(\because ~ 2^m = \underset{i=0}{\overset{m}{\sum}} {m \choose i}\text{ and }{m \choose m-i}={m \choose i} \right)   \nonumber\\
& =  \!\left(\underset{i=0}{\overset{m}{\sum}}\! {m \choose i} \!\!-\!\!\!\! \underset{i=0}{\overset{m-r-1}{\sum}}\! {m \choose i}\!\!\right) \!\!-\!\! \left(\underset{j=r+1}{\overset{m}{\sum}}\! {m \choose j} \!\!-\!\!\!\! \underset{i=r+1}{\overset{m-r-1}{\sum}}\! {m \choose i}\!\!\right),\nonumber\\
&~~~~~\text{ where }j=m-i, \nonumber\\
&= \left(\underset{i=m-r}{\overset{m}{\sum}} {m \choose i}\right) - \left(\underset{j=m-r}{\overset{m}{\sum}} {m \choose j}\right) = 0.
\end{align*}

We observe that EA RM codes have a zero rate. Thus, the catalytic rate is negative as $n_e$ is at least 1 for an EA code.
\end{proof}

\begin{example} \label{ex:EARM_code_example}
Using the $\mathrm{RM}(1,4)$ code provided in Example \ref{ex:RM_code_example} with the EA RM code construction provided in this Section, we obtain the EA RM code with the check matrix
\begin{align}
\mathcal{H} = \left[\begin{array}{cc|cc} H & H_{ex} & \mathbf{0} & \mathbf{0} \\ \mathbf{0} & \mathbf{0} & H & H_{ez}\end{array}\right],
\end{align}
where, 
\begin{align}
\!\!\!H\! \! 
&=\!\!\!\begin{bmatrix}\mathrm{Eval}^{(4)}(1)\\\mathrm{Eval}^{(4)}(x_1)\\\mathrm{Eval}^{(4)}(x_2)\\\mathrm{Eval}^{(4)}(x_3)\\\mathrm{Eval}^{(4)}(x_4)\\\mathrm{Eval}^{(4)}(x_1x_2)\\\mathrm{Eval}^{(4)}(x_1x_3)\\\mathrm{Eval}^{(4)}(x_1x_4)\\\mathrm{Eval}^{(4)}(x_2x_3)\\\mathrm{Eval}^{(4)}(x_2x_4)\\\mathrm{Eval}^{(4)}(x_3x_4)\end{bmatrix},\nonumber
\end{align}
\begin{align}
&=\!\!\!\left[\!\begin{array}{p{0.05cm}p{0.05cm}p{0.05cm}p{0.05cm}p{0.05cm}p{0.05cm}p{0.05cm}p{0.05cm}p{0.05cm}p{0.05cm}p{0.05cm}p{0.05cm}p{0.05cm}p{0.05cm}p{0.05cm}p{0.05cm}}
1 & 1 & 1 & 1 & 1 & 1 & 1 & 1 & 1 & 1 & 1 & 1 & 1 & 1 & 1 & 1\\
0 & 0 & 0 & 0 & 0 & 0 & 0 & 0 & 1 & 1 & 1 & 1 & 1 & 1 & 1 & 1\\
0 & 0 & 0 & 0 & 1 & 1 & 1 & 1 & 0 & 0 & 0 & 0 & 1 & 1 & 1 & 1\\
0 & 0 & 1 & 1 & 0 & 0 & 1 & 1 & 0 & 0 & 1 & 1 & 0 & 0 & 1 & 1\\
0 & 1 & 0 & 1 & 0 & 1 & 0 & 1 & 0 & 1 & 0 & 1 & 0 & 1 & 0 & 1\\
0 & 0 & 0 & 0 & 0 & 0 & 0 & 0 & 0 & 0 & 0 & 0 & 1 & 1 & 1 & 1\\
0 & 0 & 0 & 0 & 0 & 0 & 0 & 0 & 0 & 0 & 1 & 1 & 0 & 0 & 1 & 1\\
0 & 0 & 0 & 0 & 0 & 0 & 0 & 0 & 0 & 1 & 0 & 1 & 0 & 1 & 0 & 1\\
0 & 0 & 0 & 0 & 0 & 0 & 1 & 1 & 0 & 0 & 0 & 0 & 0 & 0 & 1 & 1\\
0 & 0 & 0 & 0 & 0 & 1 & 0 & 1 & 0 & 0 & 0 & 0 & 0 & 1 & 0 & 1\\
0 & 0 & 0 & 1 & 0 & 0 & 0 & 1 & 0 & 0 & 0 & 1 & 0 & 0 & 0 & 1
\end{array}\right]\!\!,\label{eqn:RM_Example_H}
\end{align}
and $H_{ex}$ and $H_{ez}$ correspond to the operations of the stabilizer on the receiver end qubits of the pre-shared entangled qubits. We note that $H_{ex}$ and $H_{ez}$ are
\begin{align}
    H_{ex} &= \begin{bmatrix}    
    0 & 0 & 0 & 0 & 0 & 0\\
    0 & 0 & 0 & 0 & 0 & 0\\
    0 & 0 & 0 & 0 & 0 & 0\\
    0 & 0 & 0 & 0 & 0 & 0\\
    0 & 0 & 0 & 0 & 0 & 0\\
    1 & 0 & 0 & 0 & 0 & 0\\
    0 & 1 & 0 & 0 & 0 & 0\\
    0 & 0 & 1 & 0 & 0 & 0\\
    0 & 0 & 0 & 1 & 0 & 0\\
    0 & 0 & 0 & 0 & 1 & 0\\
    0 & 0 & 0 & 0 & 0 & 1\\
    \end{bmatrix}
    \text{ and }\\
    H_{ez} &= \begin{bmatrix}
    0 & 0 & 0 & 0 & 0 & 0\\
    0 & 0 & 0 & 0 & 0 & 0\\
    0 & 0 & 0 & 0 & 0 & 0\\
    0 & 0 & 0 & 0 & 0 & 0\\
    0 & 0 & 0 & 0 & 0 & 0\\
    0 & 0 & 0 & 0 & 0 & 1\\
    0 & 0 & 0 & 0 & 1 & 0\\
    0 & 0 & 0 & 1 & 0 & 0\\
    0 & 0 & 1 & 0 & 0 & 0\\
    0 & 1 & 0 & 0 & 0 & 0\\
    1 & 0 & 0 & 0 & 0 & 0\\
    \end{bmatrix}
\end{align}
The matrices $H_{ex}$ and $H_{ez}$ are obtained using the symplectic Gram-Schmidt orthogonalization procedure provided in the supplementary material of \cite{Brun06} and \cite{Nadkarni_NBEASC}. The EA RM code obtained from the $\mathrm{RM}(1,4)$ code has parameters $[[16, 0, \geq 8 ; 6]]$ quantum code that requires $6$ pre-shared entangled qubits.
\end{example}

\begin{remark}
 We note that the RM codes for constructing quantum codes using the EA construction have a smaller value of $r$ compared to those used to construct QRM codes. As the code distance of the $\mathrm{RM}(r,m)$ code is $2^{(m-r)}$, we obtain codes with better code distances using the RM codes that are not dual-containing codes. However, from Lemma \ref{lem:EARM_Zero_rate}, these RM codes provide zero rate quantum codes. Thus, we further explore techniques to construct non-zero rate codes from $\mathrm{RM}(r,m)$ codes with $r < (m-1)/2$.
\end{remark}

\begin{remark}\label{rem:EARMTPC_PositiveCodingRate}
From Theorem \ref{thm:EARM_n_e}, Lemma \ref{lem:EARM_Zero_rate}, and Corollary \ref{coro:CodingLawPossibility}, as $n_e = \sum_{i=r+1}^{m-r-1}{m \choose i} < \sum_{i=0}^{m-r-1}{m \choose i} = \rho$, the quantum{/EA RM} TPC constructed from the classical RM codes with $(m-1)/2 > r$ will have positive rate although the {QRM/EA RM} codes have zero coding rate. {We have provided an explicit proof for the same based on the code parameters in Appendix \ref{app:EARM_rate_Proof} for completeness.}
\end{remark}

\section{Entanglement-Assisted Tensor Product Codes from Reed-Muller Codes} \label{sec:EATPC_RM}

\subsection{Positive rate of the EA {RM} TPC}

We recall that the parameters of a TPC obtained from two classical codes $C_1[n_1, k_1,d_1]$ and $C_2[n_2,k_2,d_2]$ is $[n_1n_2, n_1n_2 - \rho_1\rho_2, \mathrm{min}(d_1,d_2)]$. Using two classical RM codes $\mathrm{RM}(r_1,m_1)$ and $\mathrm{RM}(r_2,m_2)$, namely, $C_1[2^{m_1}, \underset{i=0}{\overset{r_1}{\sum}} {m_1 \choose i}, 2^{m_1-r_1}]$ and $C_2[2^{m_2}, \underset{i=0}{\overset{r_2}{\sum}} {m_2 \choose i}, 2^{m_2-r_2}]$, with the TPC construction, we obtain a TPC $C$ with the code length $n = 2^{m_1+m_2}$, number of logical qubits, $k = 2^{m_1+m_2} - \left(\underset{i=0}{\overset{m_1-r_1-1}{\sum}} {m_1 \choose i}\right)\left(\underset{i=0}{\overset{m_2-r_2-1}{\sum}} {m_2 \choose i}\right)$, and minimum distance $d = \mathrm{min}\left(2^{m_1-r_1},2^{m_2-r_2}\right)$.
The EA CSS code obtained from {the TPC $C$} has the following parameters:
\begin{align*}
[[n, n-2\rho + n_e, \geq d, n_e]],
\end{align*}
 where
\begin{align*} 
 n_e\! &=\! \mathrm{gfrank}(HH^{\mathrm{T}}) \!= \mathrm{gfrank}(H_1H_1^{\mathrm{T}})\mathrm{gfrank}(H_2H_2^{\mathrm{T}}) \nonumber\\
 &~~~~~~~~~~~~~~~~~~~~~~~~~~~~~~~~~~~~~~~~~~~~~~[\text{From \cite{EABQTPC}}],\nonumber\\
 \rho &= \rho_1\rho_2 = \left(\underset{i=0}{\overset{m_1-r_1-1}{\sum}} {m_1 \choose i}\right)\left(\underset{i=0}{\overset{m_2-r_2-1}{\sum}} {m_2 \choose i}\right),\nonumber\\
 n_e &= n_{e1}n_{e2} = \left(\underset{i=r_1+1}{\overset{m_1-r_1-1}{\sum}} {m_1 \choose i}\!\right)\!\!\left(\underset{i=r_2+1}{\overset{m_2-r_2-1}{\sum}} {m_2 \choose i}\!\right).\nonumber
 \end{align*}

We next provide a result on the coding rate of an EA RM TPC.

\begin{theorem} \label{thm:EARMTPC_PositiveCodingRate}
The EA RM TPC obtained from the $\mathrm{RM}(r_1,m_1)$ and $\mathrm{RM}(r_2,m_2)$ classical RM codes has a positive coding rate.
\end{theorem}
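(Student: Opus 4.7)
The plan is to apply Corollary \ref{coro:CodingLawPossibility}, whose two hypotheses are (i) that each classical component yields an EA CSS code of zero coding rate, and (ii) that $n_{ei} = \mathrm{gfrank}(H_iH_i^{\mathrm{T}}) < \rho_i$ for $i=1,2$. In the regime where the EA construction is actually needed, $r_i < (m_i-1)/2$, condition (i) is exactly the content of Lemma \ref{lem:EARM_Zero_rate}, so no further work is required on that front.

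To verify condition (ii), I would compute the difference $\rho_i - n_{ei}$ explicitly using the closed forms already derived in the paper. Substituting
\[
\rho_i = \sum_{j=0}^{m_i-r_i-1}\binom{m_i}{j}
\]
from the proof of Lemma \ref{lem:EARM_Zero_rate}, and
\[
n_{ei} = \sum_{j=r_i+1}^{m_i-r_i-1}\binom{m_i}{j}
\]
from Theorem \ref{thm:EARM_n_e}, the subtraction telescopes to
\[
\rho_i - n_{ei} = \sum_{j=0}^{r_i}\binom{m_i}{j} \;\geq\; \binom{m_i}{0} = 1.
\]
Hence $n_{ei} < \rho_i$ for every non-trivial RM component, and both hypotheses of Corollary \ref{coro:CodingLawPossibility} are met. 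The corollary then yields that the EA RM TPC has strictly positive coding rate, which is the desired conclusion.

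I do not foresee a serious obstacle: once the telescoping identity is recorded, the theorem is essentially an immediate invocation of the coding-analog-of-superadditivity corollary. The only bookkeeping is to make sure the two binomial-sum conventions from Theorem \ref{thm:EARM_n_e} and Lemma \ref{lem:EARM_Zero_rate} line up, and to handle the degenerate boundary where $r_i \geq (m_i-1)/2$ so that $n_{ei} = 0$ and the claim reduces to $n_1 n_2 > 2\rho_1\rho_2$; this follows from $2\rho_i \leq n_i$ in the dual-containing regime, with strictness available from $\rho_i < n_i$ in any non-trivial factor. If a fully uniform argument is preferred, the algebraic rearrangement
\[
n_1n_2 + n_{e1}n_{e2} - 2\rho_1\rho_2 = 2(\rho_1-n_{e1})(\rho_2-n_{e2}) + (2\rho_1-n_{e1})\epsilon_2 + (2\rho_2-n_{e2})\epsilon_1 + \epsilon_1\epsilon_2,
\]
with $\epsilon_i := n_i + n_{ei} - 2\rho_i \geq 0$, exhibits each term as manifestly non-negative and the first as strictly positive by the telescoping identity, removing the need for any case split.
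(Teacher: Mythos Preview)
Your argument is correct. It differs from the paper's formal proof, which is a direct page-long binomial manipulation in Appendix~\ref{app:EARM_rate_Proof} that rewrites $n-2\rho+n_e$ until it collapses to $\bigl(\sum_{j=0}^{r_1}\binom{m_1}{j}\bigr)\bigl(\sum_{i=0}^{r_2}\binom{m_2}{i}+\sum_{i=m_2-r_2}^{m_2}\binom{m_2}{i}\bigr)>0$. You instead invoke Corollary~\ref{coro:CodingLawPossibility} after checking its hypotheses via the one-line telescoping $\rho_i-n_{ei}=\sum_{j=0}^{r_i}\binom{m_i}{j}\geq 1$; this is precisely the shortcut the paper advertises in the remark following Lemma~\ref{lem:EARM_Zero_rate} but then declines to use as the official proof. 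Your closing identity is a nice bonus: in the genuine EA regime both $\epsilon_i$ vanish by Lemma~\ref{lem:EARM_Zero_rate}, so it specializes to $n-2\rho+n_e = 2(\rho_1-n_{e1})(\rho_2-n_{e2}) = 2\bigl(\sum_{j=0}^{r_1}\binom{m_1}{j}\bigr)\bigl(\sum_{j=0}^{r_2}\binom{m_2}{j}\bigr)$, which is exactly the paper's endpoint reached without the intermediate algebra. The trade-off is that the paper's computation is self-contained and yields the explicit logical-qubit count along the way (reused in the catalytic-rate analysis, equation~\eqref{eqn:EARMTPC_logical_qubits_no}), whereas your route is shorter and more conceptual but leans on the superadditivity corollary as a black box. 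One small caveat: the telescoping formula $\rho_i-n_{ei}=\sum_{j=0}^{r_i}\binom{m_i}{j}$ as written presumes $r_i<(m_i-1)/2$; in the dual-containing boundary case $n_{ei}=0$ and $\rho_i-n_{ei}=\rho_i$, which is still positive but not equal to that sum, so your case split (or the general identity) is indeed needed there.
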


\!\!\!\!\!{Theorem \ref{thm:EARMTPC_PositiveCodingRate} has been proved in Remark \ref{rem:EARMTPC_PositiveCodingRate}. However, we provide an explicit proof of Theorem \ref{thm:EARMTPC_PositiveCodingRate} in Appendix \ref{app:EARM_rate_Proof} as some results in the proof will be used in Section \ref{sec:CatalyticRate}.}

From Lemma \ref{lem:EARM_Zero_rate} and Theorem \ref{thm:EARMTPC_PositiveCodingRate}, we have demonstrated the coding analog of superadditivity by constructing positive rate EA RM TPCs from classical RM codes which generated zero rate EA RM codes.

\subsection{Catalytic Rate Computation}\label{sec:CatalyticRate}

The catalytic rate $r_c$ of the EA RM TPC is
\begin{align}
&\text{$r_c$ =  (number of logical qubits - }n_e)/n,\nonumber
\end{align}
From equation \eqref{eqn:EARMTPC_logical_qubits_no} in Appendix \ref{app:EARM_rate_Proof}, we obtain
\begin{align}
&  \text{number of logical qubits } - n_e\nonumber\\
 &~~= \left(\underset{j=0}{\overset{r_1}{\sum}} {m_1 \choose j}\right)\left(\underset{i=0}{\overset{m_2}{\sum}} {m_2 \choose i}-\underset{i=r_2+1}{\overset{m_2-r_2-1}{\sum}} {m_2 \choose i}\right)\nonumber\\
 &~~~~  - \left(\underset{i=r_1+1}{\overset{m_1-r_1-1}{\sum}} {m_1 \choose i}\right)\left(\underset{i=r_2+1}{\overset{m_2-r_2-1}{\sum}} {m_2 \choose i}\right),\nonumber\\
 &~~= \left(\underset{j=0}{\overset{r_1}{\sum}} {m_1 \choose j}\right)\left(\underset{i=0}{\overset{r_2}{\sum}} {m_2 \choose i}+\underset{i=m_2-r_2}{\overset{m_2}{\sum}} {m_2 \choose i}\right) \nonumber\\
 &~~~~ - \left(\underset{i=r_1+1}{\overset{m_1-r_1-1}{\sum}} {m_1 \choose i}\right)\left(\underset{i=r_2+1}{\overset{m_2-r_2-1}{\sum}} {m_2 \choose i}\right),\nonumber\\
  &~~= \left(\underset{j=0}{\overset{r_1}{\sum}} {m_1 \choose j}\right)\left(\underset{j=0}{\overset{r_2}{\sum}} {m_2 \choose j}+\underset{u=0}{\overset{r_2}{\sum}} {m_2 \choose u}\right) \nonumber\\
 &~~~~ - \left(\underset{i=r_1+1}{\overset{m_1-r_1-1}{\sum}} {m_1 \choose i}\right)\left(\underset{i=r_2+1}{\overset{m_2-r_2-1}{\sum}} {m_2 \choose i}\right),\nonumber\\
  &~~~~~~~\text{ where }u = m_2-i,\nonumber\\
    &~~= 2\left(\underset{j=0}{\overset{r_1}{\sum}} {m_1 \choose j}\right)\left(\underset{u=0}{\overset{r_2}{\sum}} {m_2 \choose u}\right) \nonumber\\
 &~~~~ - \left(\underset{i=r_1+1}{\overset{m_1-r_1-1}{\sum}} {m_1 \choose i}\right)\left(\underset{i=r_2+1}{\overset{m_2-r_2-1}{\sum}} {m_2 \choose i}\right),\label{eqn:catalytic_rate}
\end{align}
which could be positive for certain values of $r_1$, $r_2$, $m_1$, and $m_2$. {We note that $\text{number of logical qubits } - n_e = 2(R_1+R_2-R_1R_2)-1$, where $R_i$s are the coding rates of the classical component codes of the TPC.}

In Section \ref{sec:tpcexample}, we provide a subclass of EA RM TPCs which have positive catalytic rates.

\subsection{EA {RM TPCs} with Positive Catalytic Rate}\label{sec:tpcexample}

We next provide a subclass of EA {RM} TPCs that have positive catalytic rates.

\begin{theorem} \label{thm:EARMTPC_Subclass_PositiveCatalyticRate}
 For $r, s \in \mathbb{Z}^+$, the EA {RM} TPCs constructed from classical $\mathrm{RM}(r,2r+s)$ codes have positive catalytic rates if $s \leq l(r)$ and have zero or negative catalytic rate if $s > l(r)$, where 
 \begin{align}
 l(r) \!\!=& \mathrm{max}~i\nonumber\\
 &~\text{subject to } \displaystyle\sum_{u=0}^{r}\!\!\binom{2r+i}{u}\! >\!\! \frac{2^{2r+i}}{2+\sqrt{2}} \text{ and }i\! \in\!\mathbb{Z}^+\!.\nonumber
 \end{align}
\end{theorem}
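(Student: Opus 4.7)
The plan is to specialize the catalytic numerator from equation \eqref{eqn:catalytic_rate} to the symmetric case $r_1 = r_2 = r$, $m_1 = m_2 = m = 2r+s$, and then reduce positivity to the inequality that defines $l(r)$.

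First I set $S := \sum_{j=0}^{r}\binom{m}{j}$ and $T := \sum_{i=r+1}^{m-r-1}\binom{m}{i}$. Under this specialization, equation \eqref{eqn:catalytic_rate} collapses to $2S^{2} - T^{2}$ in its numerator. The key identity is that $T$ and $S$ are not independent: by $\sum_{i=0}^{m}\binom{m}{i} = 2^{m}$ together with the symmetry $\binom{m}{i} = \binom{m}{m-i}$, the upper tail $\sum_{i=m-r}^{m}\binom{m}{i}$ equals $S$, so $2S + T = 2^{m}$, i.e., $T = 2^{m} - 2S$.

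Next I factor the difference of squares as $2S^{2} - T^{2} = (\sqrt{2}\,S - T)(\sqrt{2}\,S + T)$. The second factor is strictly positive (since $S \geq 1$ and $T \geq 0$), so the sign of the catalytic rate coincides with the sign of the first factor. Substituting $T = 2^{m} - 2S$, the first factor becomes $(2 + \sqrt{2})\,S - 2^{m}$, so the catalytic rate is positive iff $\sum_{u=0}^{r}\binom{2r+s}{u} > \frac{2^{2r+s}}{2+\sqrt{2}}$, which is exactly the condition in the definition of $l(r)$ evaluated at $i = s$.

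To bridge this pointwise equivalence with the threshold statement of the theorem, I would verify that $f(s) := \sum_{u=0}^{r}\binom{2r+s}{u}\big/2^{2r+s}$ is strictly decreasing in $s$. By Pascal's rule $\sum_{u=0}^{r}\binom{m+1}{u} = 2S - \binom{m}{r}$, hence $f(s+1)/f(s) = 1 - \binom{m}{r}/(2S) < 1$. Monotonicity of $f$ ensures that the defining inequality holds for all $s \leq l(r)$ and fails for all $s > l(r)$, which closes the argument. The only substantive step is the identity $T = 2^{m} - 2S$; everything else is elementary algebra and one monotonicity check, so I do not anticipate a genuine obstacle.
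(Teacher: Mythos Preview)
Your proposal is correct and follows essentially the same route as the paper: specialize \eqref{eqn:catalytic_rate} to $r_1=r_2=r$, $m_1=m_2=2r+s$ to obtain $2S^2-T^2$, use the identity $2S+T=2^m$ (binomial symmetry) to reduce positivity to $S>2^{m}/(2+\sqrt{2})$, and then establish the threshold via Pascal's rule. Your monotonicity step, computing $f(s+1)/f(s)=1-\binom{m}{r}/(2S)<1$, is a slightly cleaner packaging of the paper's inductive argument that failure at $s$ implies failure at $s+1$, but the content is the same.
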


\begin{proof}
The classical $\mathrm{RM}(r,2r+s)$ codes are not dual-containing codes for $s \geq 2$ as $(m-1)/2 > r$. Thus, from Lemma \ref{lem:EARM_Zero_rate}, the EA RM codes constructed from $\mathrm{RM}(r,2r+s)$ codes have zero coding rate and negative catalytic rate. From equation \eqref{eqn:catalytic_rate}, for the EA RM TPCs constructed by considering both the classical component codes as $\mathrm{RM}(r,2r+s)$, we obtain
\begin{align}
&n_{\text{catalytic}}\nonumber\\
&=\text{number of logical qubits }-n_e\nonumber\\
&= 2\left(\underset{u=0}{\overset{r_1}{\sum}} {m_1 \choose u}\right)\left(\underset{v=0}{\overset{r_2}{\sum}} {m_2 \choose v}\right) \nonumber\\
& - \left(\underset{w=r_1+1}{\overset{m_1-r_1-1}{\sum}} {m_1 \choose w}\right)\left(\underset{x=r_2+1}{\overset{m_2-r_2-1}{\sum}} {m_2 \choose x}\right),\nonumber\\ 
&= 2\left(\underset{u=0}{\overset{r}{\sum}} {2r+s \choose u}\right)^2 \!\!\!-\!\! \left(\underset{w=r+1}{\overset{r+s-1}{\sum}} {2r+s \choose w}\right)^2\!\!\!. \label{eqn:ex_catalytic_rate_1}
\end{align}
 For the catalytic rate to be positive, $n_{\text{catalytic}}$ should be greater than $0$. Thus, from equation \eqref{eqn:ex_catalytic_rate_1}, we want $2\left(\underset{u=0}{\overset{r}{\sum}} {2r+s \choose u}\right)^2 - \left(\underset{w=r+1}{\overset{r+s-1}{\sum}} {2r+s \choose w}\right)^2 > 0$ for the EA RM TPC to have positive catalytic rate. We note that for $a, b \in \mathbb{Z}^+$, $2a^2 - b^2 >0 \iff \sqrt{2}a > b$. Thus, choosing $a = \underset{u=0}{\overset{r}{\sum}} {2r+s \choose u}$ and $ b = \underset{w=r+1}{\overset{r+s-1}{\sum}} {2r+s \choose w}$, we obtain that the EA RM TPC has a positive catalytic rate if and only if
 \begin{align}
     &\sqrt{2}\left(\underset{u=0}{\overset{r}{\sum}} {2r+s \choose u}\right) \!>\! \left(\underset{w=r+1}{\overset{r+s-1}{\sum}} {2r+s \choose w}\right),\nonumber\\
     \Rightarrow &\sqrt{2}\left(\underset{u=0}{\overset{r}{\sum}} {2r+s \choose u}\!\right) \!>\! \left(2^{2r+s} \!- 2\underset{w=0}{\overset{r}{\sum}} {2r+s \choose w}\!\right),\nonumber\\
     \Rightarrow &\underset{u=0}{\overset{r}{\sum}} {2r+s \choose u} \!>\! \frac{2^{2r+s}}{2+\sqrt{2}}.\nonumber
 \end{align}

 We next show that if $\underset{u=0}{\overset{r}{\sum}} {2r+s \choose u} < \frac{2^{2r+s}}{2+\sqrt{2}}$ then $\underset{u=0}{\overset{r}{\sum}} {2r+s+1 \choose u} < \frac{2^{2r+s+1}}{2+\sqrt{2}}$. We first consider that $\underset{u=0}{\overset{r}{\sum}} {2r+s \choose u} < \frac{2^{2r+s}}{2+\sqrt{2}}$, i.e., $(2+\sqrt{2})\underset{u=0}{\overset{r}{\sum}} {2r+s \choose u} < 2^{2r+s}$, then we obtain
 \begin{align}
     &(2+\sqrt{2})\underset{u=0}{\overset{r}{\sum}} {2r+s+1 \choose u} -2^{2r+s+1}\nonumber\\
     &= (2+\sqrt{2})\left(1\!+\!\underset{u=1}{\overset{r}{\sum}} {2r+s+1 \choose u}\!\right)\!-\! 2^{2r+s+1}\label{eqn:eqn_1}\\
    &= (2\!+\!\sqrt{2})\!\left(1\!+\!\underset{u=1}{\overset{r}{\sum}} {2r\!+\!s \choose u}\! +\! {2r\!+\!s \choose u\!-\!1}\!\right)\!\!-\! 2^{2r+s+1}\nonumber\\
    & ~~\left(\because\text{By Pascal's rule }\!,\binom{n}{k} \!\!=\!\! \binom{n\!-\!1}{k}\! +\! \binom{n\!-\!1}{k\!-\!1}\!\right)\nonumber\\
    &= (2\!+\!\sqrt{2})\left(\!{2r\!+\!s \choose 0}\!+\!\underset{u=1}{\overset{r}{\sum}} \!{2r\!+\!s \choose u}\! +\!\! \underset{v=0}{\overset{r-1}{\sum}} \!{2r\!+\!s \choose v}\!\right)\nonumber\\
    &~~~~- 2^{2r+s+1}, \text{ where }v = u-1,\nonumber\\
    &= (2\!+\!\sqrt{2})\!\left(\underset{u=0}{\overset{r}{\sum}}\! {2r\!+\!s \choose u} \!\!+\! \underset{v=0}{\overset{r-1}{\sum}}\! {2r\!+\!s \choose v}\!\right)\!\!-\! 2^{2r+s+1},\nonumber\\
    &= \underbrace{\left(\left((2+\sqrt{2})\underset{u=0}{\overset{r}{\sum}} {2r+s \choose u}\right) - 2^{2r+s}\right)}_{A} \nonumber\\
    &~~~~~~~+ \underbrace{\left(\left((2+\sqrt{2})\underset{v=0}{\overset{r-1}{\sum}} {2r+s \choose v}\right)- 2^{2r+s}\right)}_{B} \!<\! 0,
\end{align}
as both the terms A and B are less than 0 as $\left((2+\sqrt{2})\underset{u=0}{\overset{r}{\sum}} {2r+s \choose u}\right) < 2^{2r+s}$ and $\left((2+\sqrt{2})\underset{v=0}{\overset{r-1}{\sum}} {2r+s \choose v}\right) < \left((2+\sqrt{2})\underset{u=0}{\overset{r}{\sum}} {2r+s \choose u}\right)$. Thus, if any EA RM TPC based on $\mathrm{RM}(r,2r+s)$ code has zero or negative catalytic rate, then all the EA RM TPCs based on $\mathrm{RM}(r,2r+s +l)$ codes have negative catalytic rates for $l \in \mathbb{Z}^+$. 

We define $l(r)$ as follows:
 \begin{align}
 \!\!l(r) \!\!=& \mathrm{max}~i\nonumber\\
 &~\text{subject to } \displaystyle\sum_{u=0}^{r}\!\!\binom{2r\!+\!i}{u} \!>\!\! \frac{2^{2r+i}}{2\!+\!\sqrt{2}}\text{ and } i \in\mathbb{Z}^+.\nonumber
 \end{align}
Thus, the EA RM TPCs obtained from $\mathrm{RM}(r,m)$ codes where $2r+ 2\leq m \leq 2r + l(r)$ have positive catalytic rates, while the EA RM TPCs obtained from $\mathrm{RM}(r,m)$ codes where $m > 2r + l(r)$ have zero or negative catalytic rate.
\end{proof}

We need to solve the integer optimization problem provided in Theorem \ref{thm:EARMTPC_Subclass_PositiveCatalyticRate} to obtain $l(r)$. Since the binomial coefficient ${2r+i \choose u} < \frac{(2r+i)^u}{u!}$, the original partial sum of binomial coefficients can be regarded as a partial sum of an exponential series. Thus, $\displaystyle\sum_{u=0}^{r}\!\!\binom{2r+i}{u} < \displaystyle\sum_{u=0}^{r}\!\!\frac{(2r+i)^u}{u!}$. We can solve the integer optimization problem in Theorem \ref{thm:EARMTPC_Subclass_PositiveCatalyticRate} by taking derivatives and examining the saddle points of the polynomial  $\displaystyle\sum_{u=0}^{r}\!\!\frac{(2r+i)^u}{u!}$  for a given $r$ and checking if the positive integer values of $i$ near the optimal value satisfy $\displaystyle\sum_{u=0}^{r}\!\!\binom{2r+i}{u}\! >\!\! \frac{2^{2r+i}}{2+\sqrt{2}}$. Indeed, this optimization is as complex as solving for the roots of the polynomial equation, motivating the difficulty in the pursuit of an exact analytical solution.

We note that we are practically not interested in $\mathrm{RM}$ codes with $m > 300$ as the length of the classical $\mathrm{RM}$ code is $2^m$ and as the number of atoms in this universe is much less than $2^{300}$. In Table \ref{tab:l_r}, we list the values of $l(r)$ computed using numerical analysis for various ranges of $r$. We note that for all practical values of $m$, i.e., $m <= 300$, $l(r) <= 10$.

\begin{table}[ht]
    \centering
    \caption{Table of $l(r)$ for various ranges of $r$}
    \begin{tabular}{|c|c|}
        \hline
        \hline
        r & l(r)\\
        \hline
        \hline
        r $\in$ [1,5] & 2\\
        \hline
        r $\in$ [6,13] & 3\\
        \hline
        r $\in$ [14,24] & 4\\
        \hline
        r $\in$ [25,39] & 5\\
        \hline
        r $\in$ [40,57] & 6\\
        \hline
        r $\in$ [58,78] & 7\\
        \hline
        r $\in$ [79,103] & 8\\
        \hline
        r $\in$ [104,131] & 9\\
        \hline
        r $\in$ [132,162] & 10\\
        \hline
        \hline
    \end{tabular}
    \label{tab:l_r}
\end{table}

{We note that the condition in Theorem \ref{thm:EARMTPC_Subclass_PositiveCatalyticRate} is special case of the condition in Corollary \ref{coro:EACSS_positive_catalytic} tailored to the EA RM TPCs. It can be viewed as a simplification of the condition in Theorem \ref{thm:EARMTPC_Subclass_PositiveCatalyticRate} to obtain a condition based on the RM code parameters $r$ and $m = 2r+s$.}

We next show explicitly we always obtain a positive catalytic rate EA RM TPC from classical $\mathrm{RM}(i,2i+2)$ codes.
\begin{lemma} \label{lem:EARMTPC_2i+2}
 For $i \in \mathbb{Z}^+$, the EA {RM} TPCs constructed from classical $\mathrm{RM}(i,2i+2)$ codes have positive catalytic rates.
\end{lemma}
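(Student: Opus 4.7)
The lemma is the specialization $s = 2$ of Theorem~\ref{thm:EARMTPC_Subclass_PositiveCatalyticRate}, so my plan is to invoke that theorem and just verify that $l(r) \geq 2$ for every $r \in \mathbb{Z}^+$. By definition of $l(r)$, this amounts to establishing the inequality
\begin{equation*}
    \sum_{u=0}^{r} \binom{2r+2}{u} > \frac{2^{2r+2}}{2+\sqrt{2}}
\end{equation*}
for all $r \geq 1$.

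The natural first move is to rewrite the partial binomial sum in closed-ish form using the symmetry $\binom{2r+2}{u} = \binom{2r+2}{2r+2-u}$. Splitting $2^{2r+2} = \sum_{u=0}^{2r+2}\binom{2r+2}{u}$ about the center gives $2\sum_{u=0}^{r}\binom{2r+2}{u} + \binom{2r+2}{r+1} = 2^{2r+2}$, so the required inequality is equivalent to
\begin{equation*}
    \binom{2r+2}{r+1} < (\sqrt{2}-1)\, 2^{2r+2},
\end{equation*}
after simplifying $1 - \tfrac{2}{2+\sqrt{2}} = \sqrt{2}-1$. This reduction is the main conceptual step; everything else is a monotonicity argument on the normalized central binomial coefficient.

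Define $f(r) := \binom{2r+2}{r+1}/2^{2r+2}$. A direct ratio computation yields
\begin{equation*}
    \frac{f(r+1)}{f(r)} = \frac{(2r+4)(2r+3)}{4(r+2)^2} = \frac{2r+3}{2r+4} < 1,
\end{equation*}
so $f$ is strictly decreasing in $r$. Hence it suffices to verify $f(1) < \sqrt{2}-1$; computing $f(1) = \binom{4}{2}/16 = 3/8$, the bound $3/8 < \sqrt{2}-1$ follows from $11 < 8\sqrt{2}$, i.e.\ $121 < 128$. Combining monotonicity with the base case gives $f(r) \leq 3/8 < \sqrt{2}-1$ for all $r \geq 1$, which is exactly the reduced inequality above; invoking Theorem~\ref{thm:EARMTPC_Subclass_PositiveCatalyticRate} then concludes the proof.

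The only real obstacle is the symmetry rewriting: once one recognizes that the partial binomial sum condition is equivalent to an upper bound on the central term, the rest is an elementary two-line induction that bypasses any reliance on Stirling-type estimates (which would otherwise leave the $r=1$ case uncomfortably tight).
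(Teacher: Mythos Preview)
Your argument is correct, but it proceeds quite differently from the paper's. The paper does not invoke Theorem~\ref{thm:EARMTPC_Subclass_PositiveCatalyticRate}; instead it works directly from the catalytic-rate formula \eqref{eqn:ex_catalytic_rate_1} specialized to $s=2$, which becomes $2a^2 - b^2$ with $a = \sum_{u=0}^{r}\binom{2r+2}{u}$ and $b = \binom{2r+2}{r+1}$. It then shows the \emph{stronger} inequality $a > b$ for $r \geq 2$ via an explicit factorial computation giving $\binom{2r+2}{r} + \binom{2r+2}{r-1} - \binom{2r+2}{r+1} = \frac{(2r+2)!}{(r+1)!(r+3)!}(r^2 - 3) > 0$, and handles $r=1$ by direct evaluation ($2\cdot 25 - 36 = 14$). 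Your route---rewriting the partial sum via the symmetry identity $2a + b = 2^{2r+2}$ to reduce to $b/2^{2r+2} < \sqrt{2}-1$, then observing monotonicity of the normalized central binomial---is arguably cleaner: it treats all $r \geq 1$ uniformly and avoids any ad~hoc factorial algebra. The paper's approach, on the other hand, is self-contained (no dependence on Theorem~\ref{thm:EARMTPC_Subclass_PositiveCatalyticRate}) and yields the sharper bound $a > b$ for $r \geq 2$, which your argument does not produce since you only need $\sqrt{2}\,a > b$.
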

\begin{proof}
Substituting $s=2$ in equation \eqref{eqn:ex_catalytic_rate_1}, we obtain the difference between the number of logical qubits and $n_e$ to be 
\begin{align}
    &\text{number of logical qubits }-n_e\nonumber\\
    &= 2\left(\underset{u=0}{\overset{r}{\sum}} {2r+2 \choose u}\right)^2 - {2r+2 \choose r+1}^2. \label{eqn:n_c_s2}
\end{align}
Thus, using $r=1$ in equation \eqref{eqn:n_c_s2}, we obtain
\begin{align}
    &\text{number of logical qubits }-n_e\nonumber\\
&= 2\left(\underset{j=0}{\overset{1}{\sum}} {4 \choose j}\right)^2 - {4 \choose 2}^2 = 14 >0. \label{eqn:ex_catalytic_rate_2}
\end{align}
We note that, for $r>1$ and $r \in \mathbb{Z}^+$, 
\begin{align}
    &{2r+2 \choose r} + {2r+2 \choose r-1} - {2r+2 \choose r+1} \nonumber\\
    &= \frac{(2r\!+\!2)!}{r!(r\!+\!2)!} \!+\! \frac{(2r\!+\!2)!}{(r\!-\!1)!(r\!+\!3)!} \!-\! \frac{(2r\!+\!2)!}{(r\!+\!1)!(r\!+\!1)!},\nonumber\\
    &= \frac{(2r\!+\!2)!}{(r\!+\!1)!(r\!+\!3)!}\big( (r^2\!+\!4r\!+\!3) + (r^2 \!+\! r) \nonumber\\
    &\hspace{5.5cm} \!-\! (r^2 \!+\! 5r\! +\! 6)\big),\nonumber\\
    &= \frac{(2r\!+\!2)!}{(r\!+\!1)!(r\!+\!3)!}\!(r^2\!-\!3) \!>\!0,\nonumber\\
    &\Rightarrow \!\!{2r\!+\!2 \choose r}\!\! +\!\! {2r\!+\!2 \choose r\!-\!1}\!\! >\!\! {2r\!+\!2 \choose r\!+\!1},\nonumber\\
    &\Rightarrow \!\underset{u=0}{\overset{r}{\sum}} {2r\!+\!2 \choose u} \!>\! {2r\!+\!2 \choose r\!+\!1},\text{ for }r\!>\!1\text{ and }r \!\in\! \mathbb{Z}^+, \label{eqn:ex_catalytic_rate_3}
\end{align}
as the summation $\left(\underset{u=0}{\overset{r}{\sum}} {2r+2 \choose u}\right)$ contains ${2r+2 \choose r} + {2r+2 \choose r-1}$. 
Thus, substituting equation \eqref{eqn:ex_catalytic_rate_3} in equation \eqref{eqn:n_c_s2}, we obtain
\begin{align}
&\text{number of logical qubits }-n_e\nonumber\\     &=2\left(\underset{u=0}{\overset{r}{\sum}} {2r+2 \choose u}\right)^2 - {2r+2 \choose r+1}^2 > 0.,\nonumber
\end{align}
Thus, for $r \in \mathbb{Z}^+$, from equations  \eqref{eqn:ex_catalytic_rate_2} and \eqref{eqn:ex_catalytic_rate_3}, we obtain
\begin{align}
    &\text{number of logical qubits }-n_e >0,\nonumber\\
    \Rightarrow&\text{Catalytic rate }\nonumber\\
    &= (\text{number of logical qubits }-n_e)/n >0.\nonumber\qedhere
\end{align}
\end{proof}

\begin{example}
In Example \ref{ex:EARM_code_example}, we showed that the EA RM code constructed from the $\mathrm{RM}(1,4)$ code discussed in Example \ref{ex:RM_code_example} is a zero-rate $[[16, 0, \geq 8; 6]]$ quantum code. We construct the EA RM TPC by considering both classical component codes to be $\mathrm{RM}(1,4)$ codes. The classical RM TPC with both component codes being $\mathrm{RM}(1,4)$ codes has code parameters $[256, 135, 8]$. The EA RM TPC constructed from the classical RM TPC is a quantum code that requires $36$ pre-shared entangled qubits and is a $[[256, 50, \geq 8; 36]]$ code. Thus, the EA rate and the catalytic rates of the code are $(n+n_e-\rho)/n = 50/256 = 0.1953$ and $((n+n_e-\rho)-n_e)/n = (50-36)/256 = 0.05468$, which are both positive. 
\end{example}

Based on Lemma \ref{lem:EARMTPC_2i+2}, we provide some examples of codes with positive catalytic rates with both component codes of the TPC being the same RM codes in Table \ref{tab:examples}.

\begin{table*}[ht]
\caption{Examples of RM codes that provide positive catalytic rate EA RM TPCs}
\label{tab:examples}
\centering
\resizebox{\textwidth}{!}{\begin{tabular}{|c|c|c|c|}
\hline
{\begin{tabular}{c} $\RM(r, m)$ code parameters\\ $\left[ 2^m , \underset{i=0}{\overset{r}{\sum}} {m \choose i} , 2^{m-r} \right]$ \end{tabular}}& {\begin{tabular}{c} EA RM parameters\\$\left[\left[ 2^m , 0 ,\geq 2^{m-r} , \underset{i=r+1}{\overset{m-r-1}{\sum}}{m \choose i}\right]\right]$\end{tabular}} & {\begin{tabular}{c} EA RM TPC parameters\\$\left[\left[2^{2m} , 2\left(\underset{i=0}{\overset{r}{\sum}}{m \choose i}\right)^2 ,\geq 2^{m-r}, \left(\underset{i=r+1}{\overset{m-r-1}{\sum}}{m \choose i}\right)^2 \right]\right]$\end{tabular}} & {\begin{tabular}{c} EA RM TPC catalytic rate\\$2^{-2m}\left(2\left(\underset{i=0}{\overset{r}{\sum}}{m \choose i}\right)^2 - \left(\underset{i=r+1}{\overset{m-r-1}{\sum}}{m \choose i}\right)^2\right)$\end{tabular}}\\
\hline
$\mathrm{RM}(1,4),~ C[ 16 , 5 , 8 ]$ & $[[ 16 , 0 ,\geq 8 , 6 ]]$ & $[[ 256 , 50 ,\geq 8 , 36 ]]$ & $0.0546875$\\
\hline
$\mathrm{RM}(2,6),~ C[ 64 , 22 , 16 ]$ & $[[ 64 , 0 ,\geq 16 , 20 ]]$ & $[[ 4096 , 968 ,\geq 16 , 400 ]]$ & $0.138671875$\\
\hline
$\mathrm{RM}(3,8),~ C[ 256 , 93 , 32 ]$ & $[[ 256 , 0 ,\geq 32 , 70 ]]$ & $[[ 65536 , 17298 ,\geq 32 , 4900 ]]$ & $0.189178466796875$\\
\hline
$\mathrm{RM}(4,10),~ C[ 1024 , 386 , 64 ]$ & $[[ 1024 , 0 ,\geq 64 , 252 ]]$ & $[[ 1048576 , 297992 ,\geq 64 , 63504 ]]$ & $0.22362518310546875$\\
\hline
$\mathrm{RM}(5,12),~ C[ 4096 , 1586 , 128 ]$ & $[[ 4096 , 0 ,\geq 128 , 924 ]]$ & $[[ 16777216 , 5030792 ,\geq 128 , 853776 ]]$ & $0.24896955490112305$\\
\hline
\end{tabular}}
\end{table*}

Based on the relation of the parameters $r$ and $m$ of the classical RM code, we can construct either QRM codes using the CSS construction or positive rate EA RM TPCs using the construction presented in this Section. Based on Theorems \ref{thm:EARMTPC_PositiveCodingRate} and \ref{thm:EARMTPC_Subclass_PositiveCatalyticRate}, in Figure \ref{fig:m-r-plane}, we illustrate the partitioning of the $m-r$ plane into various regions for which the quantum RM TPCs constructed from {TPCs based on} classical $\RM(r,m)$ codes are either QRM codes, positive catalytic rate EA RM TPCs, or non-positive catalytic rate EA RM TPCs.


\begin{figure}[t]
    \centering
    \includegraphics[width = \columnwidth]{./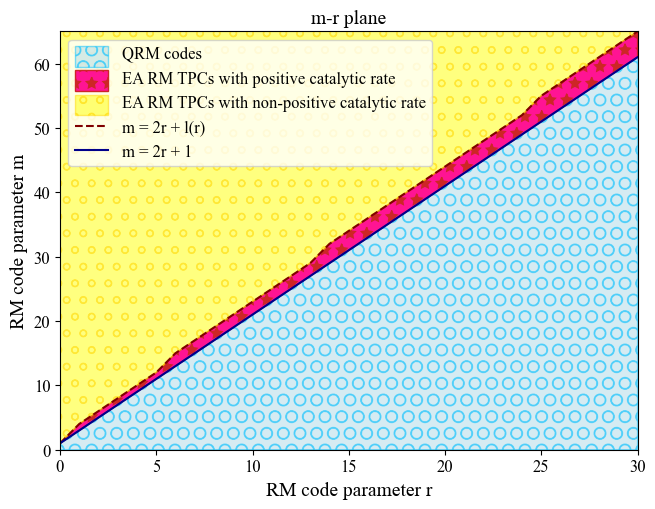}
    \caption{The $m-r$ plane is partitioned into three regions, where the regions with the pattern O correspond to $\mathrm{RM}$ codes that can be used to construct entanglement-unassisted quantum RM codes. These codes satisfy $r \geq (m-1)/2$. EA RM TPCs with positive and non-positive catalytic rates can be constructed using the $\mathrm{RM}(r,m)$ codes based on the region with pattern $\star$ and o, respectively. The line between the $\star$-patterned and o-patterned region corresponds to $m=2r + l(r)$ and the line between the $\star$-patterned and {O}-patterned region corresponds to $m = 2r+1$.}
    \label{fig:m-r-plane}
\end{figure}

{In Table \ref{tab:code_params}, we have summarized the parameters of various classical and quantum codes described in this paper.}

\begin{table*}[ht]
\caption{{Code parameters of various classical and quantum codes}}
\label{tab:code_params}
\centering
{
\resizebox{0.5\textwidth}{!}{\begin{tabular}{|c|c|}
\hline
Classical code & Code parameters\\
\hline
$\RM(r_i, m_i)$ code, $i \in \{1,2\}$ & $\left[ 2^{m_i} , \underset{j=0}{\overset{r_i}{\sum}} {m_i \choose j} , 2^{m_i-r_i} \right]$\\
\hline
RM TPC from $\RM(r_i, m_i)$ codes, $i \in \{1,2\}$ & $\left[ 2^{m_i} , \underset{j=0}{\overset{r_i}{\sum}} {m_i \choose j} , 2^{m_i-r_i} \right]$\\
\hline
\end{tabular}}}\vspace{0.3cm}
{
\resizebox{\textwidth}{!}{\begin{tabular}{|c|c|c|}
\hline
Quantum code & Code parameters & Catalytic rate\\
\hline
EA $\RM(r_i, m_i)$ code, $i \in \{1,2\}$ & $\left[\left[ 2^{m_i} , 0 ,\geq 2^{m_i-r_i}; \underset{j=r_i+1}{\overset{m_i-r_i-1}{\sum}}{m_i \choose j}\right]\right]$ & -$2^{-m_i}\left(\underset{j=r_i+1}{\overset{m_i-r_i-1}{\sum}}{m_i \choose j}\right) $ \\
\hline
\begin{tabular}{c} EA RM TPC code \\ from $\RM(r_i, m_i)$ codes, $i \in \{1,2\}$ \end{tabular} & $\left[\left[2^{2m} , 2\left(\underset{j=0}{\overset{r_1}{\sum}} {m_1 \choose j}\right)\left(\underset{u=0}{\overset{r_2}{\sum}} {m_2 \choose u}\right) ,\geq 2^{m-r}, \left(\underset{i=r_1+1}{\overset{m_1-r_1-1}{\sum}} {m_1 \choose i}\right)\left(\underset{i=r_2+1}{\overset{m_2-r_2-1}{\sum}} {m_2 \choose i}\right) \right]\right]$ & $2^{-2m}\left(2\left(\underset{j=0}{\overset{r_1}{\sum}} {m_1 \choose j}\right)\left(\underset{u=0}{\overset{r_2}{\sum}} {m_2 \choose u}\right) - \left(\underset{i=r_1+1}{\overset{m_1-r_1-1}{\sum}} {m_1 \choose i}\right)\left(\underset{i=r_2+1}{\overset{m_2-r_2-1}{\sum}} {m_2 \choose i}\right)\right)$\\
\hline
\end{tabular}}}
\end{table*}

\begin{remark}
 In Theorem 5 of \cite{EntropicSingletonBound}, Grassl \textit{et al.} have shown that the net entanglement production is at most $\mathrm{max}(0,n-2d+2)$. Thus, the catalytic rate is always negative when $n-2d+2 < 0$. For EA RM TPCs with both classical component codes being $\RM(r,m)$, the length $n = 2^{2m} = 2^{4r+2s}$, $d \geq 2^{(m-r)} = 2^{(2r+s-r)} = 2^{(r+s)}$, where $s \leq l(r)$. Thus, $n-2d+2 \leq 2^{(4r+2s)} - 2^{(r+s+1)} + 2$. As $4r+2s \geq r+s+1$ for $r,s \in \mathbb{Z}^+$, we obtain $2^{(4r+2s)} - 2^{(r+s+1)} + 2 > 0$. Thus, the EA RM TPCs proposed in Theorem \ref{thm:EARMTPC_Subclass_PositiveCatalyticRate} having positive catalytic rate do not violate the Grassl \textit{et al.}'s bound provided in Theorem 5 of \cite{EntropicSingletonBound} when $d = 2^{(m-r)} = 2^{(r+s)}$. However, the exact distance of the EA RM TPCs needs to be computed towards a more general statement.
\end{remark}

\begin{remark}
The construction of EA RM codes over qudits using classical non-binary RM codes is a natural extension of this work. {As classical non-binary RM codes have a non-zero coding rate, from Corollary \ref{coro:CodingLawPossibility},} the EA RM codes over qudits have positive catalytic rates. {We note that} EA Reed-Solomon (RS) codes\cite{EARS_codes_conf}\cite{EARS_codes}, which are a special case of the EA RM codes over qudits, have been shown to have positive catalytic rate. {Obtaining the exact expression for $l(r)$ is beyond the scope of this paper. We refer an interested reader to \cite{CodingAnalogSuperadditivity}, \cite{Nadkarni_NBEASC}, \cite{EARS_codes}, and \cite{CSS_Code_qudit} for detailed analysis of CSS qudit codes, EA qudit stabilizer, EA qudit CSS codes, EA qudit TPCs, and EA RS codes.}
\end{remark}

\subsection{TPC from RM, QRM {codes}, and entanglement assistance} \label{subsec:TPC_QRM_compare}

From the tensor product properties discussed in Section \ref{sec:prelim}, we now draw connections between the TPC constructed with RM codes and the QRM code.

\textit{QRM codespace as a subspace of TPC codespace:} 
The parity check matrix of the TPC constructed from classical codes $\RM(r_1, m_1)$ and $\RM(r_2, m_2)$ with parity check matrices $H_1 = \RM(m_1-r_1-1, m_1)$ and $H_2 = \RM(m_2 - r_2 -1, m_2)$ is given by
\begin{align}
    &H_1\otimes H_2\nonumber\\
    &= \RM(m_1 \!-\! r_1 \!-\!1, m_1)\otimes \RM(m_2\! -\!r_2 \!-\!1, m_2)\nonumber\\
    &\subseteq \RM(m_1 + m_2 - r_1 - r_2 - 2, m_1 + m_2),\\
    &~~~~~(\text{from equation } \eqref{eqn:tprm_in_rm})\nonumber
\end{align}
The QRM code $\mathrm{QRM}(r_1 + r_2 + 1, m_1 + m_2)$ has the parity check matrix $H = \RM(m_1 + m_2 - r_1 - r_2 - 2, m_1 + m_2)$. $H$ has the same minimum weight as the parity check matrix of the TPC and contains the parity checks of the TPC. Thus, for the considered case the QRM codespace forms a subspace of the TPC codespace. 

 For the parameter ranges $r_1 < (m_1 - 1)/2, r_2 < (m_2 - 1)/2$, the rows of $H_1\otimes H_2$ are not orthogonal and the quantum code construction requires entanglement assistance. In this range, the {quantum version of the RM} code also requires entanglement assistance. Thus, from Lemma \ref{lem:EARM_Zero_rate}, we find EA RM TPCs with positive catalytic rates, but we do not find corresponding {EA }RM with positive catalytic rates. The TPC construction using RM codes can be thought of as a systematic method to remove parity checks from the {EA} RM code. This removal leads to (i) a lower number of pre-shared entangled qubits required for orthogonalization, (ii) larger codespace, and thus larger coding rates. For example, for the parameters $r_1 = r_2 = 3$ and $m_1 = m_2 = 8$, we obtain the codes $\text{EA RM}(7, 16)$ with a catalytic rate $\mathcal R_C = -0.196$, while the code $\text{EA RM TPC}(\RM(3, 8), \RM(3, 8))$ has catalytic rate $\mathcal R_C = 0.189$. We note that if $r_1 \geq (m_1 - 1)/2$ or $r_2 \geq (m_2 - 1)/2$, the RM code is a dual-containing code and the RM TPC is a dual-containing TPC.

\textit{TPC codespace as a subspace of QRM codespace:}
Consider the codes $\mathrm{QRM}(r, m_1 + m_2)$ with the parity check matrix $H = \RM(m_1 + m_2 -r-1, {m_1+m_2})$ and TPC constructed with classical codes $\RM(0, m_1)$ and $\RM(r - m_1, m_2)$, $m_2 > m_1, r > m_1$ with the parity check matrix $H_1\otimes H_2 = \RM(m_1, m_1) \otimes \RM(m_1 + m_2 - r -1, m_2)$. By considering $r$ to be $m_1+m_2-r-1$ in equation \eqref{eqn:rm_in_tprm}, the parity checks of the two codes are related by
\begin{align}
    &\!\!\!\!\RM(m_1 \!+\! m_2 \!-\! r \!-\! 1, m_1\! +\! m_2) \nonumber\\ 
    &\!\!\!\!~~\subseteq \RM(m_1, m_1) \!\otimes\! \RM(m_1\! +\! m_2\! -\! r\! -\!1, m_2).\!\!
\end{align}
The parity checks of the TPC contain the parity checks of the QRM code. Thus, here the TPC codespace forms a subspace of the QRM codespace. For the parameter range $r < (m_1 + m_2 -1)/2$, the QRM and the TPC are not dual-containing and thus require entanglement assistance to construct quantum codes. In this parameter range with the additional constraint of $r < m_2$, while both have negative catalytic rates, we obtain EA {RM} TPCs with positive coding rate while the EA RM code has a zero rate. For parameters $m_1 = 5, m_2 = 11, r = 7$, we have the EA codes $\text{EA RM}(7, 16)$, and the EA {RM} TPC constructed with $\RM(0, 5)$ and $\RM(2, 11)$. The EA RM code has $\mathcal R_E = 0$, while the EA {RM} TPC constructed exhibits $\mathcal R_E = 0.002$, $\mathcal R_C = -0.874$ despite the parity check matrix of the TPC containing all the parity checks of the QRM code.

Thus, in both cases of containment, the EA {RM} TPC performs better than the corresponding EA RM code, illustrating that the benefit arises not simply from parameter ranges, but from structural differences in their construction.

\subsection{{Discussion on EA polar codes and their tensor product variants}}

{Polar codes are generalized variants of RM codes. A $2^m$-bit polar code that encodes $K$ message bits has a generator matrix comprising $K$ rows from the matrix $\begin{bmatrix}
    1 & 0\\1 & 1
\end{bmatrix}^{\otimes m}$, where the rows of the matrix chosen depend on the polarization of the channel under consideration. As the generator matrix of RM codes also comprises rows of $\begin{bmatrix}
    1 & 0\\1 & 1
\end{bmatrix}^{\otimes m}$, polar codes are {in some sense} generalized versions of }{RM codes.} 

{The polar code is obtained by considering a $2^m$-bit vector $u$ with $K$ locations containing the message bits and rest containing $0$. Let $f$ be a $2^m$-bit vector whose $i^{\mathrm{th}}$ element $f_i$ is $1$ if the $i^{\mathrm{th}}$ element of $u$ contains a message bit, else $f_i$ is $0$. The $K$ locations containing the message bits correspond to the index of the rows of $\begin{bmatrix}
    1 & 0\\1 & 1
\end{bmatrix}^{\otimes m}$ that belongs to the generator matrix $G$ of the polar code, i.e., $G = F\begin{bmatrix}
    1 & 0\\1 & 1
\end{bmatrix}^{\otimes m}$, where $F = \mathrm{diag}(f)$ is a diagonal matrix with the $i^{\mathrm{th}}$ diagonal element being $f_i$. The codeword for vector $u$ is $uG$. The polar code has parameters $[2^m, K, d]$, where the minimum distance $d$ depends on the form of $G$.}

{We next discuss the quantum version of a polar code based on a particular instance of polarization of the channels. As some quantum RM codes require entanglement-assistance, while others do not, quantum polar codes may or may not require entanglement-assistance depending on the generator matrix chosen. Let $H$ be the parity check matrix of the classical polar code $[2^m, K, d]$. The quantum polar code is a $[[2^m, 2K - 2^m + \mathrm{gfrank}(HH^{\mathrm{T}}), \geq d, \mathrm{gfrank}(HH^{\mathrm{T}})]]$ code.}

{When $\mathrm{gfrank}(HH^{\mathrm{T}}) = 2^m-2K$, the quantum polar code has zero rate. As the classical polar codes have non-zero rate, the quantum polar TPC constructed would have positive coding/entanglement-assisted rate.}
{A TPC constructed from two polar codes with parameters $[2^{m_1}, K_1, \geq d_1]$ and $[2^{m_2}, K_2, \geq d_2]$ has parameters $[2^{m_1+m_2}, K_12^{m_2} + K_22^{m_1} - K_1K_2, \mathrm{min}(d_1, d_2)]$. The EA polar TPC has parameters $[[2^{m_1+m_2},  K_12^{m_2+1} + K_22^{m_1+1} - 2^{m_1+m_2} - 2K_1K_2 + n_e, \geq \mathrm{min}(d_1, d_2); n_e = \mathrm{gfrank}(H_1H_1^{\mathrm{T}})\mathrm{gfrank}(H_2H_2^{\mathrm{T}})]]$. From Corollary \ref{coro:EACSS_positive_catalytic}, the catalytic rate is positive when $R_1 + R_2 > R_1R_2 + 0.5$ $\Rightarrow$ $\frac{K_1}{2^{m_1}} + \frac{K_2}{2^{m_2}} > \frac{K_1K_2}{2^{m_1+m_2}} + \frac{1}{2}$. When we consider both the component codes to be the same, we obtain the following:}
{
\begin{align}
     & \frac{K^2}{2^{2m}} + \frac{1}{2} < \frac{2K}{2^{m}} ,\nonumber\\
    \Rightarrow & 2K^2 - K2^{m+2} + 2^{2m} < 0,\nonumber\\
    \Rightarrow & 2(K^2 - 2K(2^m) + 2^{2m}) - 2^{2m} < 0,\nonumber\\
    \Rightarrow & 2(K - 2^{m})^2  < 2^{2m},\nonumber\\
    \Rightarrow & (2^{m} - K)^2  < \frac{2^{2m}}{2},\nonumber\\
    \Rightarrow & (2^{m} - K)  < \frac{2^{m}}{\sqrt{2}},\nonumber\\
    \Rightarrow & (2^{m} - \frac{2^{m}}{\sqrt{2}})  < K, \nonumber\\
    \Rightarrow & 2^{m}\frac{\sqrt{2} - 1}{\sqrt{2}} = \frac{2^{m}}{2 + \sqrt{2}}  < K.
 \end{align}}

 {We note that this is the exact relation we obtained for RM codes, where $K$ was a function of $l(r)$. We obtain a natural boundary here as well.}



\section{Conclusion{s and Open Problems}}\label{sec:Conclusion}
We showed that the EA RM codes constructed from classical binary RM codes that are not dual-containing codes have zero coding rates and negative catalytic rates. Utilizing the {EA} tensor product code construction with the same classical binary RM codes, we constructed EA RM TPCs that have positive coding rates; thereby, demonstrating the coding analog of superadditivity. We also showed that some of these codes, such as EA RM TPCs constructed from $\mathrm{RM}(i, 2i+2)$ codes, for $i \in \mathbb{Z}^+$, can have positive catalytic rate, useful for building robust quantum communication systems. 

{We showed that the quantum TPC construction can be used to obtain positive rate EA polar TPCs whose component codes yield zero rate EA polar codes. We provided the condition for these EA polar TPCs to have positive catalytic rates. We also showed that positive rate EA RM TPCs over qudits can be constructed for any RM code. A rigorous analysis of simplified conditions for EA RM TPCs over qudits and EA polar TPCs to have positive catalytic rates and deriving their relations to larger length EA RM qudit codes and EA polar codes are future directions of research.}

{Punctured quantum Reed Muller codes have been shown to admit non-Clifford transversal gate operations and are useful for fault tolerant quantum computation. However, the EA variants do not necessarily admit transversal gate operations. Finding the parameter ranges and conditions under which transversal gates are possible for an EA code and determining the EA construction to achieve this is an interesting open problem.}

{One of the important quests in quantum information theory is to identify the necessary and sufficient conditions for channels to be superadditive under a generic setting. This is an open problem. With zero-rate channel capacities, one can make use of entanglement-assistance using the tensor product coding framework using codes with suitable geometries to achieve positive catalytic rates.}  

{A natural extension of this work is to generalize the results in this paper for establishing coding theorems over entanglement-assisted networks within a graph-theoretic setup. These are some of the interesting directions to pursue towards quantum network information and coding theories.}

\section*{Code availability}
Numerical results in Tables \ref{tab:l_r} and \ref{tab:examples} and the parameter boundaries depicted in Figure \ref{fig:m-r-plane} were obtained using code developed from standard Python libraries and can be found at \url{https://github.com/Praveen91299/EA_RM_TPC.git}

\section*{Acknowledgment}
P. Jayakumar and A. Behera acknowledge fellowship support from Kishore Vaigyanik Protsahan Yojana (KVPY) scheme, Government of India, for their undergraduate research.

\onecolumn
\newpage
\appendix

\section{Properties of evaluation vector function $\mathrm{Eval}^{(m)}(\cdot)$}\label{app:prop}
The following lemmas hold true for any finite field $\mathbb{F}_d, d\in \mathbb{Z}_+$. We restrict to $\mathbb{F}_2$ throughout this paper.

\begin{replemma}{lemma:linearity}[Linearity] 
    For $f, g : \mathbb{F}_2^m \rightarrow \mathbb{F}_2; f, g \in W_m := \mathbb{F}_2[x_1, x_2, \cdots, x_m]$ the evaluation function has the following properties:
\begin{subequations}
    \begin{align}
    (i)~~& \mathrm{Eval}^{(m)}(f + g) = \mathrm{Eval}^{(m)}(f) \oplus \mathrm{Eval}^{(m)}(g)\label{prop:eval1app}\\
    (ii)~~& \mathrm{Eval}^{(m)}(fg) = \mathrm{Eval}^{(m)}(f)\odot\mathrm{Eval}^{(m)}(g)\label{prop:eval2app}
\end{align}
\end{subequations}
where $\odot$ denotes the bitwise product of the two vectors {$\odot : \mathbb{F}_2^{2^m} \times \mathbb{F}_2^{2^m} \rightarrow \mathbb{F}_2^{2^m}$.}
\end{replemma}
\begin{proof}
    Using linearity of the $\mathrm{Eval}_{\bm{z}}(\cdot)$ function we first prove equation \eqref{prop:eval1app}, and from there we proceed to prove equation \eqref{prop:eval2app}. For a point $\bm{z} = (z_1, z_2, \cdots, z_m)$ and two linear polynomial functions $f, g \in W_m$, $\mathrm{Eval}_{\bm{z}}(f+g) = (f+g)(\bm{z}) = f(\bm{z}) + g(\bm{z})$. But $f(\bm{z}) = \mathrm{Eval}_{\bm{z}}(f)$ and $g(\bm{z}) = \mathrm{Eval}_{\bm{z}}(g)$. We have $\mathrm{Eval}_{\bm{z}}(f+g) = \mathrm{Eval}_{\bm{z}}(f) + \mathrm{Eval}_{\bm{z}}(g)$ $\forall \bm{z} \in \mathbb{F}_2^m$, thus equation \eqref{prop:eval1app} follows.
    
    To prove equation \eqref{prop:eval2app}, we first show that it is true when $f, g$ are monomials and then generalize to polynomials using equation \eqref{prop:eval1app}. For index sets A and B, consider $f = x_A$, $g = x_B$, then $fg = x_C$ where $C = A \cup B$ (as if $i \in A$ or $i \in B$, then $i \in C$ as $x_i^j = x_i$ for $j\geq 1, j\in \mathbb{Z}^+$). Recall that $x_A = \prod_{i\in A}x_i$ for an index set $A$. If $i \in A$ ($i \in B$), when $z_i = 0$ we have $\mathrm{Eval}_{\bm{z}}(x_A) = 0$ ($\mathrm{Eval}_{\bm{z}}(x_B) = 0$) and also $\mathrm{Eval}_{\bm{z}}(x_C) = 0$. Whenever $\mathrm{Eval}_{\bm{z}}(x_A) = 0$ or $\mathrm{Eval}_{\bm{z}}(x_B) = 0$, $\mathrm{Eval}_{\bm{z}}(x_C) = \mathrm{Eval}_{\bm{z}}(fg) = 0$. Thus we conclude that equation \eqref{prop:eval2app} is true for monomials. Now consider $f = \sum_{i=1}^m a_ix_{A_i}$ and $g = \sum_{i=1}^n b_ix_{B_i}$ where $A_i, B_j \in \{1, 2,\cdots, m\} \forall i, j$, then $fg = \sum_{i, j = 1}^{m, n} a_ib_jx_{A_i}x_{B_j}$. Using equation \eqref{prop:eval1app}, we have 
    \begin{align}
        &\mathrm{Eval}^{(m)}(fg)= \mathop{\bigoplus}_{i, j = 1}^{m, n} a_i b_j \mathrm{Eval}^{(m)}(x_{A_i}x_{B_j})\nonumber\\
        &= \mathop{\bigoplus}_{i, j = 1}^{m ,n} a_i b_j \mathrm{Eval}^{(m)}(x_{A_i})\odot\mathrm{Eval}^{(m)}(x_{B_j})\nonumber\\
        &= \mathop{\bigoplus}_{i= 1}^{m} \left(a_i \mathrm{Eval}^{(m)}(x_{A_i})\odot\left(\mathop{\bigoplus}_{j = 1}^{n} b_j\mathrm{Eval}^{(m)}(x_{B_j})\right)\right)\nonumber
        \end{align}
        \begin{align}
        &= \left(\mathop{\bigoplus}_{i= 1}^{m} a_i \mathrm{Eval}^{(m)}(x_{A_i})\right)\odot\left(\mathop{\bigoplus}_{j = 1}^{n} b_j\mathrm{Eval}^{(m)}(x_{B_j})\right)\nonumber\\
        &= \mathrm{Eval}^{(m)}(f)\odot\mathrm{Eval}^{(m)}(g)\qedhere
    \end{align}
\end{proof}

\begin{replemma}{lemma:tpRM}[Tensor product]
    The tensor product of evaluation vectors of polynomials over the field $\mathbb{F}_2$ is an evaluation vector of a polynomial from a larger ring of the same field.
    In particular, for positive integers $m_1, m_2$ and $i \in  [m_1], j \in [m_2]$ where $[m] = \{1, 2, \cdots, m\}$ the evaluation vector has the property:
    \begin{align}
        (i) & \mathrm{Eval}^{(m_1)}(x_i) \!\otimes\! \mathrm{Eval}^{(m_2)}(x_j)\! =\! \mathrm{Eval}^{(m_1 + m_2)}(x_ix_{m_1 + j})\label{tp1app}\\
        (ii) & \mathrm{Eval}^{(m_1)}(x_A)\! \otimes \! \mathrm{Eval}^{(m_2)}(x_B)\! =\! \mathrm{Eval}^{(m_1 + m_2)}(x_Ax_{m_1 + B})\label{tp2app}
    \end{align}
    where $A \subseteq [m_1], B \subseteq [m_2]$ are index sets and $m_1 + B = \{m_1 + b | b \in B\}$ is a shift of indices in set $B$.
\end{replemma}
\begin{proof}
    $\mathrm{Eval}^{(m_1)}(x_i)$ and $\mathrm{Eval}^{(m_2)}(x_j)$ are binary strings of length $2^{m_1}$ and $2^{m_2}$. Thus, the tensor product has a length of $2^{m_1 + m_2}$. For $\bm z \in \mathbb{F}_2^{m_1 + m_2}$, if $z = z^{(1)}2^{m_2} + z^{(2)} \equiv (z^{(1)}, z^{(2)})$ is the $2^{m_2}$-ary representation of $z$ with $0\leq z^{(1)} < 2^{m_1}, 0\leq z^{(2)}< 2^{m_2}$. Further, if $\bm z = (z_1, z_2, \cdots, z_{m_1}, z_{m_1 + 1}, \cdots, z_{m_1 + m_2})$ is the binary representation of $z$, then $\bm z^{(1)} = (z_1, z_2, \cdots, z_{m_1})$ and $\bm z^{(2)} = (z_{m_1 + 1}, z_{m_1 + 2}, \cdots, z_{m_1 + m_2})$ are the binary representations of $z^{(1)}, z^{(2)}$ respectively.

    The $z^{\mathrm{th}}$ bit in $\mathrm{Eval}^{(m_1 + m_2)}(x_ix_{m_1 + j})$ is $\mathrm{Eval}_{\bm z}(x_ix_{m_1 + j}) =  z_iz_{m_1 + j}$. If $z = z^{(1)}2^{m_2} + z^{(2)} \equiv (z^{(1)}, z^{(2)})$ is the $2^{m_2}$-ary representation of $z$, from the definition of tensor product, the $z^{\mathrm{th}}$ bit in $\mathrm{Eval}^{(m_1)}(x_i) \otimes \mathrm{Eval}^{(m_2)}(x_j)$ is given by $z_i^{(1)}z_j^{(2)}$. But $z_i^{
(1)} = z_i$ and $z_j^{(2)} = z_{m_1 + j}$. Thus, equation \eqref{tp1app} holds true. Equation \eqref{tp2app} follows trivially using equation \eqref{prop:eval2} and the distributivity property of tensor products $(a_1\otimes b_1)\cdot(a_2\otimes b_2) = (a_1a_2\otimes b_1b_2)$.
\end{proof}
\section{Product codes and tensor product codes}\label{app:product}
The product code obtained from two codes is equivalent to a subset of the tensor product code obtained from the same codes. A product code is a 2-D linear code wherein the rows are encoded with one code and the columns with another code. For the $[n_1, k_1, d_1]$ and $[n_2, k_2, d_2]$ classical codes with generator matrices $G_1, G_2$, the state encoded with the product code is given by $E = G_2^{\mathrm{T}}MG_1$, where $M$ is the $k_2\times k_1$ message block. Denote the unit vectors of length $k$ as $\ket{i^{(k)}} = (0, 0, \cdots, 1, 0, \cdots, 0)$ with $1$ in the $i^{\mathrm{th}}$ position. Any $k_2 \times k_1$ matrix $M$ can be written as $M = \sum_{i = 0, j = 0}^{k_2 -1, k_1 - 1}M_{i, j}\ket{i}\bra{j}$. Similarly $E = \sum_{i = 0, j = 0}^{k_2 -1, k_1 - 1}M_{i, j}G_2^{\mathrm{T}}\ket{i}\bra{j}G_1$. One can vectorize the matrix as $M\equiv \sum_{i = 0, j = 0}^{k_2 -1, k_1 - 1}M_{i, j}\bra{j}(\ket{i})^\T = \sum_{i = 0, j = 0}^{k_2 -1, k_1 - 1}M_{i, j}\bra{j}\bra{i}$ which is equivalent to appending the rows of $M$ into a single vector. Vectorizing the encoded state, 
\begin{align}
    E &\equiv\!\!\!\!\! \sum_{i = 0, j = 0}^{k_2 -1, k_1 - 1}\!\!\!\!\!\!\!\!M_{i, j}(\bra{j}G_1)(G_2^\T\ket{i})^\T= \!\!\!\sum_{i = 0, j = 0}^{k_2 -1, k_1 - 1}\!\!\!\!\!\!\!\!M_{i, j}(\bra{j}G_1)(\bra{i}G_2)= \sum_{i = 0, j = 0}^{k_2 -1, k_1 - 1}M_{i, j}\bra{j}\bra{i}(G_1\otimes G_2).
\end{align}
We obtain a 1-D linear code with the generator matrix given by $G_1\otimes G_2$.
For an RM code, since the tensor product of generator matrices of $\RM(r_1, m_1)$ and $\RM(r_2, m_2)$ form a subset of the rows of the generator matrix of $\RM(r_1 + r_2, m_1 + m_2)$, the product code obtained from $\RM(r_1,m_1)$ and $\RM(r_2, m_2)$ is equivalent to a subset of the $\RM(r_1 + r_2, m_1 + m_2)$ code.
\section{Proofs of Lemmas on coding and catalytic rates of EA RM codes and EA RM TPCs}\label{app:EARM_rate_Proof}
\begin{reptheorem}{thm:EARMTPC_PositiveCodingRate}
The EA RM TPC obtained from the $\mathrm{RM}(r_1,m_1)$ and $\mathrm{RM}(r_2,m_2)$ classical RM codes have positive coding rate.
\end{reptheorem}
\begin{proof}
The EA RM TPC obtained from two classical RM codes $\mathrm{RM}(r_1,m_1)$ and $\mathrm{RM}(r_2,m_2)$, namely, $C_1[2^{m_1}, \underset{i=0}{\overset{r_1}{\sum}} {m_1 \choose i}, 2^{m_1-r_1}]$ and $C_2[2^{m_2}, \underset{i=0}{\overset{r_2}{\sum}} {m_2 \choose i}, 2^{m_2-r_2}]$ has the parameters $[[n, n-2\rho + n_e, \geq d{;} n_e]]$, where $n_e = \mathrm{gfrank}(HH^{\mathrm{T}}) = \mathrm{gfrank}(H_1H_1^{\mathrm{T}})\mathrm{gfrank}(H_2H_2^{\mathrm{T}})$ \cite{EABQTPC}, $\rho = \rho_1\rho_2 = \left(\underset{i=0}{\overset{m_1-r_1-1}{\sum}} {m_1 \choose i}\right)\left(\underset{i=0}{\overset{m_2-r_2-1}{\sum}} {m_2 \choose i}\right)$ and $n_e = n_{e1}n_{e2} = \left(\underset{i=r_1+1}{\overset{m_1-r_1-1}{\sum}} {m_1 \choose i}\right)\left(\underset{i=r_2+1}{\overset{m_2-r_2-1}{\sum}} {m_2 \choose i}\right)$. Thus, the number of logical qubits is 
\begin{align}
&n-2\rho+n_e\nonumber\\
& = 2^{m_1+m_2} - 2\left(\underset{i=0}{\overset{m_1-r_1-1}{\sum}} {m_1 \choose i}\right)\left(\underset{i=0}{\overset{m_2-r_2-1}{\sum}} {m_2 \choose i}\right) + \left(\underset{i=r_1+1}{\overset{m_1-r_1-1}{\sum}} {m_1 \choose i}\right)\left(\underset{i=r_2+1}{\overset{m_2-r_2-1}{\sum}} {m_2 \choose i}\right),\nonumber\\
&= 2^{m_1+m_2} \!-\! \left(\underset{i=0}{\overset{m_1-r_1-1}{\sum}} {m_1 \choose i}\!\right)\!\left(\underset{i=0}{\overset{m_2-r_2-1}{\sum}} {m_2 \choose i}\!\right)\!-\! \left(\underset{i=0}{\overset{r_1}{\sum}}\! {m_1 \choose i}\!\! +\!\! \underset{i=r_1+1}{\overset{m_1-r_1-1}{\sum}} \!\!{m_1 \choose i}\!\!\right)\!\!\left(\!\underset{i=0}{\overset{r_2}{\sum}}\! {m_2 \choose i}\!\! +\!\!\! \underset{i=r_2+1}{\overset{m_2-r_2-1}{\sum}} \!{m_2 \choose i}\!\right)\nonumber\\
&\hspace{9.5cm}+ \left(\underset{i=r_1+1}{\overset{m_1-r_1-1}{\sum}} {m_1 \choose i}\right)\left(\underset{i=r_2+1}{\overset{m_2-r_2-1}{\sum}} {m_2 \choose i}\right),\nonumber\\
&= 2^{m_1+m_2} - \left(\underset{i=0}{\overset{m_1-r_1-1}{\sum}} {m_1 \choose i}\right)\left(\underset{i=0}{\overset{m_2-r_2-1}{\sum}} {m_2 \choose i}\right)- \!\left(\underset{i=0}{\overset{r_1}{\sum}} \!{m_1 \choose i}\!\!\right)\!\!\left(\underset{i=0}{\overset{r_2}{\sum}}\! {m_2 \choose i}\!\!\right)\!\!\nonumber\\
&~~~ - \left(\underset{i=0}{\overset{r_1}{\sum}} {m_1 \choose i}\!\!\right)\!\!\left(\underset{i=r_2+1}{\overset{m_2-r_2-1}{\sum}} {m_2 \choose i}\!\!\right) - \left(\underset{i=r_1+1}{\overset{m_1-r_1-1}{\sum}} {m_1 \choose i}\right)\left(\underset{i=0}{\overset{r_2}{\sum}} {m_2 \choose i}\right)\nonumber\\
&~~~- \left(\underset{i=r_1+1}{\overset{m_1-r_1-1}{\sum}} {m_1 \choose i}\right)\left(\underset{i=r_2+1}{\overset{m_2-r_2-1}{\sum}} {m_2 \choose i}\right)+ \left(\underset{i=r_1+1}{\overset{m_1-r_1-1}{\sum}} {m_1 \choose i}\!\right)\!\!\left(\underset{i=r_2+1}{\overset{m_2-r_2-1}{\sum}} {m_2 \choose i}\!\right),\nonumber\\
&= 2^{m_1+m_2} - \left(\underset{i=0}{\overset{m_1-r_1-1}{\sum}} {m_1 \choose i}\right)\left(\underset{i=0}{\overset{m_2-r_2-1}{\sum}} {m_2 \choose i}\right)- \left(\underset{i=0}{\overset{m_1-r_1-1}{\sum}} {m_1 \choose i}\right)\left(\underset{j=m_2-r_2}{\overset{m_2}{\sum}} {m_2 \choose m_2-j}\right)\nonumber\\
&\hspace{7.5cm}-\left(\underset{i=0}{\overset{r_1}{\sum}} {m_1 \choose i}\!\right)\!\!\left(\underset{i=r_2+1}{\overset{m_2-r_2-1}{\sum}} {m_2 \choose i}\!\right)\!,\text{ where }j=m_2-i\nonumber\\
&=\! \left(\underset{i=0}{\overset{m_1}{\sum}} {m_1 \choose i}\!\right)\!\!\left(\underset{i=0}{\overset{m_2}{\sum}} \!{m_2 \choose i}\!\right)\!\! -\!\! \left(\underset{i=0}{\overset{m_1-r_1-1}{\sum}}\! {m_1 \choose i}\!\right)\!\!\left(\underset{i=0}{\overset{m_2}{\sum}} {m_2 \choose i}\!\!\right)-\left(\underset{i=0}{\overset{r_1}{\sum}} {m_1 \choose i}\right)\left(\underset{i=r_2+1}{\overset{m_2-r_2-1}{\sum}} {m_2 \choose i}\right)\nonumber\\
&\hspace{6cm}\left(\because 2^m = \left(\underset{i=0}{\overset{m}{\sum}} {m \choose i}\right)\text{ and } {m_2 \choose m_2-j} = {m_2 \choose j}\right)\nonumber\\
&= \left(\underset{i=m_1-r_1}{\overset{m_1}{\sum}} {m_1 \choose i}\right)\left(\underset{i=0}{\overset{m_2}{\sum}} {m_2 \choose i}\right)-\left(\underset{i=0}{\overset{r_1}{\sum}} {m_1 \choose i}\right)\left(\underset{i=r_2+1}{\overset{m_2-r_2-1}{\sum}} {m_2 \choose i}\right), \nonumber\\
&= \left(\underset{j=0}{\overset{r_1}{\sum}} {m_1 \choose j}\right)\left(\underset{i=0}{\overset{m_2}{\sum}} {m_2 \choose i}-\underset{i=r_2+1}{\overset{m_2-r_2-1}{\sum}} {m_2 \choose i}\right), \text{ where }j=m_1-i,~~~\left(\because {m_1 \choose j} = {m_1 \choose m_1-j}\right)  \label{eqn:EARMTPC_logical_qubits_no}\\
&= \left(\underset{j=0}{\overset{r_1}{\sum}} {m_1 \choose j}\right)\left(\underset{i=m_2-r_2}{\overset{m_2}{\sum}} {m_2 \choose i}+\underset{i=0}{\overset{r_2}{\sum}} {m_2 \choose i}\right)\nonumber\\
&>0 ~~\left(\because ~r_1,r_2 \geq 0\text{ and }m_2 \geq m_2-r_2\right). \nonumber
\end{align}
Thus, the number of logical qubits is non-zero and the coding rate is non-zero.
\end{proof}
\vspace*{2\baselineskip}

\end{document}